%% file: paper.tex
\documentclass[12pt]{article}

\usepackage{microtype}
\usepackage{graphicx}
\usepackage{subfigure}
\usepackage{booktabs}

\usepackage{hyperref}%
\hypersetup{%
      unicode,
      breaklinks,%
      colorlinks=true,%
      urlcolor=[rgb]{0.25,0.0,0.0},%
      linkcolor=[rgb]{0.5,0.0,0.0},%
      citecolor=[rgb]{0,0.2,0.445},%
      filecolor=[rgb]{0,0,0.4},
      anchorcolor=[rgb]={0.0,0.1,0.2}%
}

\usepackage[sort,numbers]{natbib}

\usepackage{amsmath}
\usepackage{amssymb}
\usepackage{mathtools}
\usepackage{amsthm}
\usepackage[ruled, vlined, linesnumbered, noresetcount]{algorithm2e}

\usepackage[margin=1in]{geometry}

\usepackage[capitalize,noabbrev]{cleveref}

\usepackage{multicol}
\usepackage{wrapfig}
\usepackage{caption}
\usepackage[group-separator={,}]{siunitx}

\let\oldnl\nl
\newcommand{\nlnonumber}{\renewcommand{\nl}{\let\nl\oldnl}}

\newtheorem{theorem}{Theorem}[section]
\newtheorem{lemma}[theorem]{Lemma}

\newtheorem{definition}[theorem]{Definition}

\SetKw{Break}{break}
\usepackage{xcolor}
\SetKwProg{Fn}{function}{:}{}
\SetKwBlock{DoParallel}{do in parallel}{end}

\SetKwBlock{DoParallelFor}{do in parallel for each $i \in [t]$ and each $v \in \Gamma[i]$}{end}

\newcommand{\Null}{{\tt null}}
\newcommand{\abs}[1]{{\left | #1 \right |}}
\newcommand{\gray}[1]{\textcolor{gray}{#1}}
\newcommand{\E}{\mathbf{E}}
\newcommand{\C}{\mathcal{C}}

\newcommand{\eps}{\epsilon}

\renewcommand{\Pr}{\mathbf{Pr}}

\newcommand{\INDEX}{{\sc INDEX}}
\newcommand{\DISJ}{{\sc DISJ}}

\renewcommand{\paragraph}[1]{\medskip \noindent {\bf #1.}}

\newcommand{\pp}{\tt PrunedPivot}
\newcommand{\fp}{\tt FindPivot}

\newcommand{\p}{\text{pivot}}

\newcommand{\ea}{{E^{\mis}_A}}
\newcommand{\eb}{{E^{\mis}_B}}

\newcommand{\da}{{d^{\mis}_A}}

\newcommand{\mis}{\text{mis}}

\newcommand{\Ein}{E_{\text{in}}}
\newcommand{\Eout}{E_{\text{out}}}
\newcommand{\Nin}{n_{\text{in}}}
\newcommand{\Nout}{n_{\text{out}}}

\usepackage[textsize=tiny]{todonotes}
\title{Estimating Correlation Clustering Cost in Node-Arrival Stream}
\author{%
    Kaiwen Liu\thanks{Authors listed in alphabetical order.} \textsuperscript{,}\thanks{Department of Computer Science, Indiana University, Bloomington, IN, USA. Correspondence to: Qin Zhang \textless{}\texttt{qzhangcs@iu.edu}\textgreater{}. The work is supported in part by NSF CCF-1844234.}%
    \and%
    Seba Daniela Villalobos\footnotemark[1] \textsuperscript{,}\footnotemark[2]%
    \and%
    Qin Zhang\footnotemark[1] \textsuperscript{,}\footnotemark[2]%
}

\begin{document}
    \maketitle
	
	\begin{abstract}
        We study the correlation clustering problem in the node-arrival data stream model. Unlike previous work, where the stream consists of the graph's edges, we focus on the setting in which the stream contains only the \emph{nodes}. This model better reflects many real-world scenarios in which the data stream naturally consists of raw objects (e.g., images, tweets), and the similar/dissimilar edges are derived through a similarity function. We present \texttt{C$^4$Approx}, a streaming algorithm that approximates the cost of correlation clustering using sublinear space in the number of nodes and a constant number of passes. We further complement this result with lower bounds. Experiments on real-world datasets show that by storing only 2\% of the nodes, our algorithm achieves performance comparable to the classic \texttt{Pivot} algorithm and the more recent {\pp} algorithm, even on sparse graphs.
	\end{abstract}
	
	\input{intro}

	\input{algo}
	
	\input{lb}

	\input{exp}
    
	\bibliography{paper}
    \bibliographystyle{unsrtnat} 
	
	\newpage
	\appendix	
	\input{appendix}

\end{document}

%% file: intro.tex
\section{Introduction}
\label{sec:intro}

In the correlation clustering problem, we are given a complete graph with $\pm 1$ labels on the edges, where a `$+1$' (positive) edge denotes a pair of {\em similar} objects and a `$-1$' (negative) edge denotes a pair of {\em dissimilar} objects. The goal is to partition the nodes into clusters so as to minimize the total number of mismatches. A mismatch occurs in two cases: (1) a positive edge connects nodes in different clusters, or (2) a negative edge connects nodes in the same cluster.

Correlation clustering is a fundamental problem in computer science with a wide range of applications, including data deduplication~\cite{ARS09}, community detection~\cite{SDE+21,VWG20}, computer vision~\cite{KNK+11}, natural language processing~\cite{ES09}, bioinformatics~\cite{BD10}, social network analysis~\cite{LDP+17}, and portfolio diversification~\cite{CZR+15}.  
The problem is NP-hard and even APX-hard~\cite{BBC04,CGW05}. A number of $O(1)$-approximation algorithms have been proposed for correlation clustering in the past several decades~\cite{BBC04,CGW05,ACN08,CMS+15,CLN22,CLL+23,CCL+24,CCL+25}. 
However, most of these algorithms are based on mathematical programming, which makes them difficult to implement in {\em big data} models such as the data stream model and distributed/parallel computation models. 

For big data models, several pass- and round-efficient $O(1)$-approximation algorithms have been designed~\cite{CDK14,ACG+15,PPO+15,CLM+21,BCM+22,AW22,BCM+23,MC23,ASW23,CHS24,CKL+24,CLP+24,DMM24}, most of which build on the celebrated {\tt Pivot} algorithm of \citet{ACN08}, as will be discussed  in details in Section~\ref{sec:alg}.

\vspace{2mm}
\noindent{\bf Node-Arrival Streams.\ }
In this paper, we work on the data stream model~\cite{AMS99}, where the input arrives as a sequence of items over time. The algorithm may perform one or several sequential scans of the input stream and is required to compute a function of the stream's items at the end. The primary goal is to minimize the algorithm's space complexity.  

Several works studied the correlation clustering problem in the data stream model~\cite{ACG+15,AW22,BCM+23,MC23,CKL+24}, where the stream consists of the edges of the graph. We refer to this model as the {\em edge-arrival} stream model. These algorithms achieve space complexity of $O(n~\textrm{polylog}(n))$ words, where $n$ is the number of nodes in the graph.\footnote{We assume each node or edge can be stored in one word.} 

In many real-world applications, it is more natural to model the stream as containing the graph's nodes, while edge labels ($+1$ for similar and $-1$ for dissimilar) can be computed {\em on-demand} by a pairwise similarity function whenever both end nodes are in memory, rather than being stored explicitly.  After all, the nodes correspond to the actual data objects (e.g., images, tweets), while the edges are derived quantities that encode similarity between nodes.

In node-arrival streams, a natural question is whether correlation clustering can be solved using space sublinear in the number of nodes $n$. With $O(n)$ space, the problem is trivial, since we can store all nodes and reconstruct the edges on-demand using the similarity function. However, if the goal is to output the actual clustering, $\Omega(n)$ space is unavoidable, as the number of clusters can be as large as $n$. In this work, rather than outputting a full clustering, we instead aim to learn the ``clusterability'' of the input graph. Specifically, we focus on approximating the {\em cost} of correlation clustering, defined as the number of mismatch pairs.
This paper addresses the following question:
\begin{quote}
	\em  Can we obtain a good approximation of the cost of optimal correlation clustering in node-arrival streams using $o(n)$ space?
\end{quote}  

\vspace{2mm}
\noindent{\bf Motivations.\ }
We give two concrete applications for estimating the cost of correlation clustering in node-arrival streams. 

{\em Quantifying data inconsistency.}
The first application is to measure the inconsistency in a dataset $\sigma = (\sigma_1, \ldots, \sigma_n)$.  Imperfect data acquisition, mismatches in how information is encoded, and distortions introduced during processing can blur true relationships: items that represent different underlying entities may appear similar, while those from the same entity may seem dissimilar. When representing the dataset $\sigma$ as a signed complete graph $G^\sigma$, where each item corresponds to a node, with similar pairs assigned edges labeled `$+1$' and dissimilar pairs assigned edges labeled `$-1$', the minimum number of inconsistent pairs is exactly the correlation clustering cost.  A recent paper \cite{LZ26} introduces the concept of {\em $F_p$-mismatch-ambiguity} to quantify the impact of data inconsistency on the $F_p$ problem.  For $p = 2$, $F_2$-mismatch-ambiguity is essentially the correlation clustering cost. While this parameter appears in the approximation guarantees of streaming algorithms for the frequency moment $F_2$, their work does not provide a method for estimating it in the streaming model. Our work is therefore complementary, as it focuses on approximating this quantity in the data stream model.

{\em Learning a good similarity function.} Another application is selecting an appropriate similarity function, parameterized by $\theta$, so that the dataset becomes well clusterable. The correlation clustering cost induced by this function provides a natural measure for evaluating its effectiveness. For instance, in our experiments on the ImageNet-21K dataset, each item is embedded into a vector in $\mathbb{R}^{1024}$, and similarity is determined using a threshold $\theta$ on cosine similarity. The parameter $\theta$ can be optimized via grid search to minimize the resulting clustering cost. Streaming algorithms enable efficient estimation of this cost through linear scans over the data while using only a small memory.

\paragraph{Our Contribution}
We say $\tilde{X}$ is a $(c, d)$-approximation of $X$ if $X \le \tilde{X} \le cX + d$; when $d = 0$, we simply say $\tilde{X}$ is a $c$-approximation of $X$. We say $\tilde{X}$ is a $(1 \pm \eps, \pm d)$-approximation of $X$ if $(1 - \eps)X - d \le \tilde{X} \le (1 + \eps)X + d$.  Our contributions are as follows.

\begin{enumerate}
	\item Our main result is the algorithm {\tt C$^4$Approx} ({\em \underline{C}ompact \underline{C}orrelation \underline{C}lustering \underline{C}ost \underline{Approx}imator}) that gives the following guarantee:
	\begin{itemize}
		\item For any constant $\kappa \in (0,1]$, there exists a streaming algorithm in the node-arrival model that computes a $(O(1), n^\kappa)$-approximation to the cost of optimal correlation clustering on an $n$-node graph with probability $0.99$. The algorithm runs in $O(1)$ passes and uses $O(n^{1-\kappa/4}\log n)$ words of space.
	\end{itemize}
	
	\item We complement this algorithmic result with two lower bounds, which justify the use of both relative and additive errors in algorithms with $O(1)$-pass complexity.
	\begin{itemize}
		\item {\em Multiple passes are necessary:} for any $c \ge 1$ and $d \in (0, \frac{n}{3}]$, any one-pass $0.49$-error $(c, d)$-approximation streaming algorithm for computing the cost of the optimal correlation clustering on an $n$-node graph in the node-arrival model needs $\Omega(n)$ bits of space. 
		\item {\em Additive error is unavoidable:} for any $c \ge 1$, any $O(1)$-pass $0.49$-error $(c, 0)$-approximation streaming algorithm for computing the cost of the optimal correlation clustering on an $n$-node graph in the node-arrival model needs $\Omega(n)$ bits of space.
	\end{itemize}
	
	\item We evaluate  {\tt C$^4$Approx} using real-world datasets. The results show that {\tt C$^4$Approx} achieves clustering cost comparable to {\tt Pivot} while requiring significantly less memory.
\end{enumerate}

\vspace{2mm}
\noindent{\bf A Comparison with \citet{ASW23}.\ }
In the edge-arrival model, Theorem 1 of \cite{ASW23} shows that there is an algorithm that achieves an $(O(1), \delta n^2)$-approximation using $\mathrm{polylog}(n)/\delta^5$ words of space, which is polylogarithmic only when $\delta$ is a constant.  Their algorithm can be implemented in the node-arrival model with the same space and approximation guarantees. In contrast, our algorithm achieves an $(O(1), n^\kappa)$-approximation using $O(n^{1-\kappa/4}\log n)$ space for any $\kappa > 0$. For example, setting $\kappa = 0.1$ yields an $(O(1), n^{0.1})$-approximation with still sublinear space. Matching this additive error in \cite{ASW23} would require setting $\delta = n^{-1.9}$, which in turn gives a space complexity of $\mathrm{polylog}(n)/\delta^5 = \Omega(n^{9.5})$.

Generally speaking, node-arrival and edge-arrival streams are not directly comparable. In node-arrival streams, edges can only be queried when both endpoints are simultaneously stored, making it impossible to even enumerate all edges in $o(n)$
space using constant passes. In contrast,  edge-arrival streams allow a full scan of edges in each pass. On the other hand, the node-arrival model permits querying edges between stored nodes at any time, which may require additional passes in the edge-arrival setting.

In Section~\ref{sec:experiments}, we have empirically compared our main algorithm  {\tt C$^4$Approx}  with the two algorithms proposed in \cite{ASW23}.  Our results demonstrate that  {\tt C$^4$Approx} clearly outperforms both algorithms in \cite{ASW23} in terms of stability and accuracy, especially on sparse graphs.

\vspace{2mm}
\noindent{\bf Other Related Work.\ } \cite{BCC+25, BDH+19, CZ19, CLZ+24, DMM25} study correlation clustering in dynamic settings, considering various update models including edge insertions and deletions as well as node insertions and deletions. Their main focus is on update time, and their algorithms still require at least linear space.

%% file: algo.tex
\section{The Algorithm}
\label{sec:alg}

\noindent{\bf The Pivot Algorithm.\ }
We start by reviewing the {{\tt Pivot} algorithm~\cite{ACN08} and the {\pp}} algorithm~\cite{DMM24}. Let $V$ be the set of $n$ input nodes, and let $G$ be the undirected graph on $V$ induced by the similarity function, containing only positive edges.

In {\tt Pivot}, we first choose a random permutation $\pi$ of $V$, where $\pi(u)$ denotes the rank of node $u$. A node $u$ is said to have a {\em higher} rank than $v$ if $\pi(u) < \pi(v)$. The algorithm iteratively selects the highest ranked remaining node as the \emph{pivot}, forms a cluster consisting of the pivot and its remaining neighbors, and removes these nodes from the graph. This simple algorithm achieves a $3$-approximation of the cost of the optimal correlation clustering.

The {\tt Pivot} algorithm can be formulated recursively.  Let $\p_\pi(u)$ be the pivot node of the cluster containing $u$ w.r.t.~permutation $\pi$, and $N(u)$ be the set of neighbors of $u$.  It suffices to determine $\p_\pi(u)$ for each $u \in V$.  Note that $\p_\pi(u)$ is either $u$ or some $v \in N(u)$ with a higher rank.  Thus, computing $\p_\pi(u)$ reduces to computing $\p_\pi(v)$ for each neighbor $v \in N(u)$ with $\pi(v) < \pi(u)$. If no such neighbor is a pivot, $u$ itself serves as the pivot.

The  {\pp} algorithm adds an early-stopping rule to the recursive formulation of the {\tt Pivot} algorithm: if $\p_\pi(u)$ remains unidentified after $k$ recursive calls for a fixed parameter $k$, the recursion terminates and $u$ is placed in a singleton cluster.  This modification renders {\pp} suitable for parallel and local computation models.

\begin{theorem}[\citet{DMM24}] \label{thm:pp}
	With probability at least ${2}/{3}$, the cost of the clustering produced by {\pp} with parameter $k$ is a $\left(9 + \frac{24}{k-1}\right)$-approximation of the cost of the optimal clustering.
\end{theorem}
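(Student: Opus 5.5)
This is a cited result of \citet{DMM24}. I would prove it by splitting the cost of {\pp} into two parts: the cost of the full {\tt Pivot} run under the same permutation $\pi$, and the extra cost created by the early stopping.

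\textbf{Step 1: decomposition.} Fix $\pi$. Let $S_\pi$ be the set of nodes whose pivot is not determined within $k$ recursive calls; {\pp} turns these into singletons. Every other node gets exactly the cluster it would get in {\tt Pivot} with the same $\pi$, because the recursion computes $\p_\pi(u)$ exactly whenever it finishes. Making a node $u$ a singleton can only create new mismatches on the positive edges incident to $u$. Hence
\[
\mathrm{cost}({\pp}) \le \mathrm{cost}({\tt Pivot}_\pi) + \sum_{u\in S_\pi}\deg(u).
\]
By \citet{ACN08}, $\E[\mathrm{cost}({\tt Pivot}_\pi)]\le 3\,\mathrm{OPT}$.

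\textbf{Step 2: bounding $\E\sum_{u\in S_\pi}\deg(u)$.} This is the main obstacle. If the recursion from $u$ runs for $k$ calls, it explores a decreasing chain $u=v_0,v_1,\dots$ with $\pi(v_{i+1})<\pi(v_i)$ and $v_{i+1}\in N(v_i)$, and every node explored is a non-pivot. I would charge the positive edges at $u$ to \emph{bad triangles} (a triangle with two positive edges and one negative edge), since $\mathrm{OPT}$ is lower bounded by a fractional packing of bad triangles. This is the LP-duality argument of \citet{ACN08}, where each bad triangle's probability of being ``hit'' forms a feasible packing.

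The intended charging goes as follows. A node $v$ on the chain is not a pivot because some higher-ranked neighbor $w$ of $v$ is a pivot. Around such a point, either $N(v)$ and $N(w)$ differ by many nodes, which produces many bad triangles, or they nearly coincide. When they nearly coincide, I expect the pivot to be resolved quickly. The random order then makes a long non-resolving chain unlikely, with probability roughly $\tfrac{1}{k}$ per unit of cost. Summing over $u$ should give
\[
\E\Big[\sum_{u\in S_\pi}\deg(u)\Big] \le \Big(c_0 + \tfrac{c_1}{k-1}\Big)\,\mathrm{OPT}.
\]
The constants are chosen so that $3+c_0=3$ in expectation, i.e.\ so that the total expectation is $\big(3+\tfrac{8}{k-1}\big)\mathrm{OPT}$. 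I would try to make the $\tfrac{1}{k-1}$ rate rigorous by a harmonic-type argument: among the $k$ chain nodes, the one that is highest ranked relative to its neighborhood occurs at a uniformly random position.

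\textbf{Step 3: from expectation to probability.} The cost is nonnegative, so Markov's inequality gives
\[
\Pr\Big[\mathrm{cost}({\pp}) > 3\Big(3+\tfrac{8}{k-1}\Big)\mathrm{OPT}\Big] \le \tfrac13,
\]
and $3\big(3+\tfrac{8}{k-1}\big)=9+\tfrac{24}{k-1}$. This exactly matches the constant in \cref{thm:pp}, which suggests the intended structure is an expected $\big(3+\tfrac{8}{k-1}\big)$-approximation followed by Markov.
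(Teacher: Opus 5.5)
The paper does not prove this theorem; it imports it from \citet{DMM24}, so your argument has to stand on its own. Steps~1 and~3 are sound. If the search for $u$ finishes, every subcall finished within a smaller budget, so $u$ receives its true {\tt Pivot} pivot and that pivot also survives its own search. Hence {\tt PrunedPivot} is the {\tt Pivot} clustering with $S_\pi$ split off into singletons. Also, $9+\frac{24}{k-1}=3\bigl(3+\frac{8}{k-1}\bigr)$ is indeed Markov applied to an expected bound. But Step~2 is the whole theorem, and what you give there is a heuristic. Its constants ($c_0=0$, $c_1=8$) are read off from the target rather than derived.

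Worse, the inequality Step~2 aims for, $\E\bigl[\sum_{u\in S_\pi}\deg(u)\bigr]\le\frac{8}{k-1}\mathrm{OPT}$, is false, so no charging scheme can establish it. Here is a counterexample.
\begin{itemize}
\item Let $K\cup K'$ be a clique with $|K|=|K'|=D\gg k$, and add a node $u$ adjacent exactly to $K$. Then $\mathrm{OPT}=D$.
\item With probability about $1/2$, the highest-ranked node $b$ of the whole graph lies in $K'$. Then every $a\in K$ joins $b$, and the search from $u$ spends two calls on each higher-ranked $a\in K$ (one on $a$, one on $b$).
\item Conditioned on this event, typically $\Theta(D)$ nodes of $K$ precede $u$, so $u$ times out with probability $1-O(k/D)$.
\item Hence $\E\bigl[\sum_{u\in S_\pi}\deg(u)\bigr]\approx D/2=\mathrm{OPT}/2$. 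For every fixed $k>17$ and large $D$, this exceeds $\frac{8}{k-1}\mathrm{OPT}$.
\end{itemize}
Yet in this event {\tt PrunedPivot} outputs exactly the {\tt Pivot} clustering, since $u$ is a singleton in both, so the true extra cost is zero.

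Two things go wrong in your plan. First, the budget is a global count of all recursive calls, so timeouts come from wide, shallow search trees. Here the depth is $2$ and the pivot $b$ is explored repeatedly. Your picture of a long decreasing chain of non-pivots does not capture this. Second, a node typically times out precisely because it has many higher-ranked neighbors outside its own cluster. Those edges are already paid for inside the $3\,\mathrm{OPT}$ bound of \citet{ACN08}, so charging the full $\deg(u)$ double-counts them. A workable version must charge only the genuinely new mismatches, namely the positive edges from $u$ to its {\tt Pivot} cluster-mates, or else couple the charge with the bad-triangle analysis. You would then still need to prove an $O\bigl(\frac{1}{k-1}\bigr)\mathrm{OPT}$ bound on that quantity, and that is the part your proposal does not supply.
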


To facilitate discussion, we introduce the following notions.  We say $u \sim v\ (u \not\sim v)$ if there is a $+1\ (-1)$ edge between node $u$ and node $v$.  For convenience, we also say $u \sim u$ for any $u \in V$.
\begin{definition}[mismatch pair]
	\label{def:mismatch-pair}
	Given a permutation $\pi$, a pair of nodes $(u,v)$ is called a mismatch pair if $u \sim v$ but $\p_\pi(u) \neq \p_\pi(v)$, or if $u \nsim v$ but $\p_\pi(u) = \p_\pi(v)$.
\end{definition}

\begin{definition}[mismatch graph]
	\label{def:mismatch-graph}
	Given a permutation $\pi$,  let $G^{\mis}_{\pi} = (V, E^{\mis}_{\pi})$ denote the mismatch graph of $G$ under $\pi$, where the edge set $E^{\mis}_{\pi}$ consists of all mismatch pairs in $V \times V$ under $\pi$.
\end{definition}
For convenience, when $\pi$ is clear from context, we will abbreviate $\p_\pi(\cdot), G^{\mis}_{\pi}, E^{\mis}_{\pi}$ as $\p(\cdot), G^{\mis}, E^{\mis}$.  We will use a constant parameter $\alpha \in [0, 1)$ to control the tradeoffs between the (additive) approximation factor and the space usage of our algorithm. We provide a list of key notations in Table~\ref{tab:notation}. 

\begin{table}[t]
	\caption{Summary of notations}
\label{tab:notation}
	\centering
		\begin{tabular}{@{}ll@{}}
			\toprule
			Notation & Meaning \\
			\midrule
			$V$ & input stream $(\sigma_1, \ldots, \sigma_n)$ \\
			$G$ & positive-edge graph induced by similarity \\
			$N(u)$ & neighbors of $u$ \\
			$d_u$ & degree of $u$ in $G$ \\
			$\pi(u)$ & rank of $u$ in $\pi$ \\
			$\p_\pi(u)$ & pivot of $u$ under {\pp} \\
            $R$ & highest-ranked stored nodes \\
            $A, B$ & pivot determined / not determined by $R$ \\
            $k$ & pivot-search depth limit \\
            $E^{\text{mis}}_\pi$ & mismatch pairs under $\pi$ \\
            $\alpha$ & constant controls space-error tradeoffs \\
            $\beta$ & $\beta = (1-\alpha)/4$ \\
			\bottomrule
		\end{tabular}
\end{table}

\vspace{2mm}
\noindent{\bf High-Level Ideas of {\tt C$^4$Approx}.\ } 
Before presenting our algorithm, we try to illustrate the main ideas.  Our goal is to estimate $\abs{E^{\mis}}$, the cost of the clustering produced by {\tt PrunedPivot}, which by Theorem~\ref{thm:pp} is an $O(1)$-approximation of the optimal clustering cost with a good probability for any $k \ge 2$. Note that for each node $u$, $\p(u)$ can be computed by executing at most $k$ recursive calls. This gives a naive $O(k)$-pass algorithm (referred to as {\tt SimpleSampling} hereafter): we sample $q$ pairs of nodes, find their pivots and test whether they belong to $E^{\mis}$ according to Definition~\ref{def:mismatch-pair}, and then scale back. However, this straightforward approach incurs a large additive error $\Theta({n^2}/{\sqrt{q}})$, which is $\omega(n^{1.5})$ when $q = o(n)$. 

We hope to further improve {\tt SimpleSampling}, especially by reducing its additive error.  The main difficulty is that under an $o(n)$ space budget, we cannot store the pivot information for all nodes. Consequently, we do not have an oracle for $E^{\mis}$, which, given two nodes, determines whether $(u,v) \in E^{\mis}$ {\em directly} without going through the $O(k)$-pass recursive search. 

Our key idea is that we can make a {\em partial oracle} for $E^{\mis}$ by maintaining a small reference set $R$ in memory, consisting of the $\tilde{O}(n^{1-\beta})$ highest ranked nodes according to the random permutation $\pi$. This allows us to directly compute $\p(u)$ for all nodes $u$ with $\p(u) \in R$.  Moreover, even if $\p(u) \notin R$, $\p(u)$ can still be computed directly as long as the degree of $u$ in $G$ is high. The reason is that $\p(u)$ depends on its $k$ highest ranked neighbors, which are contained in $R$ with high probability if $u$ is a high degree node. 

Let $A$ be the set of nodes whose pivots can be directly determined by $R$, and let $B = V \setminus A$. It is easy to show that all nodes in $B$ are low-degree nodes with high probability. We can further show that if the clusters are generated by running {\pp} on $G$, then each cluster lies entirely within either $A$ or $B$. We thus divide $E^{\mis}$ into two parts that can be estimated separately.

The first part is $E_A^{\mis}$, which is the set of edges with {\em at least one} end node in $A$. We can show that for any two nodes $u$ and $v$, if either $u \in A$ or $v \in A$, then the reference set $R$ together with  the similarity function suffices to determine whether $(u,v)$ is a mismatch edge. The problem then reduces to estimating the average degree of the subgraph $G^{\mis}_A = (V, E_A^{\mis})$ in the node-arrival stream given access to an edge query oracle.  Simple subsampling does {\em not} work for this task, as node degrees range over $\{0, 1, \ldots, n-1\}$, leading to prohibitively high variance. Instead, we partition the set $V$ into $H$ and $L$, the high and low-degree nodes  in $G^{\mis}_A$ respectively. Such a partition can be done by sampling a subset $S_1$. For each node $u$, let $N^{\mis}_A(u)$ be its neighborhood in $G^{\mis}_A$. Then, $\abs{N^{\mis}_A(u) \cap S_1}$ acts as a certificate for determining if $u$ is a high-degree node, and if it is, we can estimate its degree through appropriate rescaling. For the remaining (low-degree) nodes, we can just use subsampling, as the estimator exhibits low variance and therefore gives a good approximation.

The second part is $E_B^{\mis}$, which consists of edges with {\em both} end nodes in $B$. For two nodes $u, v \in B$, we cannot use $R$ to determine whether $(u,v) \in E^{\mis}$. We thus take a different approach. Note that $B$ only contains low-degree nodes, so all clusters within $B$ are small. Let $\C(B)$ denote the set of clusters in $B$. We can directly sample from $\C(B)$ and store the sampled clusters entirely. We then count the intra- and inter-cluster edges incident to the sampled clusters, which can be rescaled to obtain an estimate of $\abs{E^{\mis}_B}$. As the contribution of each cluster is bounded, the variance of this estimator is small, yielding a good approximation.

Our algorithm is based on two key ideas: (1) a sublinear-size reference set for partitioning nodes, and (2) a high-low degree decomposition to control variance of the sampling-based estimations. Although developed for correlation clustering, these techniques may extend to other graph problems in node-arrival streams.

\vspace{2mm}
\noindent{\bf Node Partition Using $R$\ .\ }
We now present Algorithm~\ref{alg:find-pivot} {\fp}, a variant of {\pp}~\cite{DMM24} that restricts the search of $\p(u)$ for a query node $u \in V$ within the set $R$. The node partition $V = A \uplus B$ is then determined by running {\fp} on each node in $V$.

\begin{algorithm}
	\caption{{\fp}$(u, \pi, R)$}
	\label{alg:find-pivot}
	\DontPrintSemicolon
	\SetAlgoNoEnd
	
	\KwIn{query node $u$, permutation $\pi: V \to [n]$, a subset $R \subseteq V$ of highest ranks}
	\KwOut{$\p(u)$ or $\Null$}
	
	Initialize a global variable $\gamma \gets 0$  
	
	$p \gets \text{\tt Rec-FindPivot}(u, \pi, R)$ 
	
	\lIf{$p ={\tt timeout}$}{ \Return $u$}
	
	\Return $p$ 

\nlnonumber
\;

\nlnonumber
\SetKwFunction{recfp}{Rec-FindPivot}
\SetKwProg{myproc}{Procedure}{}{} 
\label{alg:pruned-pivot}
\myproc{\recfp{$u, \pi, R$}}{
	\lIf{$\gamma \geq k$} {
		\Return ${\tt timeout}$
	}
	
	$Q(u) \gets \{v \in (N(u) \cup \{u\}) \cap R \mid \pi(v) \le \pi(u)\}$\label{ln:Q-def} 
	
	sort $Q(u)$ in ascending order w.r.t.~$\pi$
	
	\ForEach{$v \in Q(u)$} {
		\lIf{$v = u$}{
			\Return $u$
		}
		$\gamma \gets \gamma + 1$
		
		$p \gets$ \recfp{$v, \pi, R$}
		
		\lIf{$p={\tt timeout}$ or $p = v$}{\Return $p$}
	}
	
	\Return $\Null$
}

\end{algorithm}

When $R = V$, {\fp} is identical to {\pp}. When $R \subset V$, the search processes of {\fp} and {\pp} coincide when the query is restricted to $R$.  {\fp} terminates under one of the following conditions: (1) the pivot of $u$ is found; 
(2) the query budget $k$ is exhausted (i.e., ${\tt timeout}$); 
(3) $u \notin R$, no neighbor of $u$ in $R$ is identified as a pivot, and the query budget has {\em not} been reached. In Case (1), since all recursive calls made by $u$ are restricted to nodes in $R$, 
we must have $\p(u) \in R$.
In Case (2), $u$ is placed in a singleton cluster.
In Case (3), $\p(u)$ cannot be determined only using $R$, and {\fp} returns $\Null$.  
We note that {\fp} and {\pp} differ only in Case (3).

\begin{definition}[$(A, B)$ node partition w.r.t.~$R$]
	For a reference set $R$, let $A = \{u \in V \mid {\fp}(u,\pi,R) \neq \Null\}$ denote the set of nodes whose pivots can be determined by $R$.  Let $B = V \backslash A$.
\end{definition} 

In {\tt C$^4$Approx}, once the set $R$ is identified, we store it in memory. Thus, {\fp} can be executed without any additional space or pass.  

We will still use the original {\pp} to identify the pivots of nodes in the set $B$ on demand.  We will soon show that {\pp}  can be implemented in the data stream model using $k$ passes and $O(k)$ space.

The following two lemmas give the key properties of the $(A, B)$ partition. The first states that each cluster, as defined by the pivots of the nodes in $V$, consists entirely of nodes of the same type.

\begin{lemma} 
	\label{lem:A-B-partition}
	For any node $u$, we have $u \in A$ iff $\p(u) \in A$, and $u \in B$ iff $\p(u) \in B$.
\end{lemma}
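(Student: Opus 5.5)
We prove the two forward implications, \emph{$u \in A \Rightarrow \p(u) \in A$} and \emph{$u \in B \Rightarrow \p(u) \in B$}. Since $A \uplus B = V$, these two implications together give both ``iff'' statements. Throughout we use two facts. First, $R$ is a prefix of $\pi$: every node of higher rank than a node of $R$ is also in $R$. Second, {\fp} and {\pp} perform the same sequence of recursive calls for as long as the calls stay inside $R$, and they differ only in Case (3). A preliminary observation is that {\fp}$(u,\pi,R)$ never returns $\Null$ when $u \in R$. Indeed, in that case $u \in Q(u)$, so the loop either returns earlier (with a pivot or {\tt timeout}) or reaches $v=u$ and returns $u$. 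Hence $B \cap R = \emptyset$.

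\emph{Step 1 ($u\in A$).} Let $p$ be the output of {\fp}$(u,\pi,R)$. We first check that $p=\p(u)$. In the timeout case both algorithms place $u$ in a singleton. Otherwise every call made lies in $R$, so {\pp} makes exactly the same calls and returns the same $p$, because its sets $Q(\cdot)$ agree with those of {\fp} on every call it actually makes (all neighbours ranked above a node of $R$ are in $R$). If $p=u$, we are done. If $p\neq u$, then $p\in R$, and inside $u$'s search the subcall on $p$ returned $p$ without timing out. A fresh call {\fp}$(p,\pi,R)$ replays exactly this subcall, starting from a smaller counter $\gamma$, so it again returns $p \neq \Null$. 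Hence $p\in A$.

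\emph{Step 2 ($u\in B$).} Here {\fp} returned $\Null$. By the observation above, $u\notin R$. Moreover, every node of $N(u)\cap R$ ranked above $u$ was examined without timeout, and none of them was a pivot. The {\pp} search on $u$ processes exactly these $R$-nodes first, since they outrank all non-$R$ nodes, and obtains the same outcomes. Therefore $\p(u)$ is either $u$, which gives $\p(u)\in B$ immediately, or a neighbour $p\notin R$. In the latter case, suppose for contradiction that $p\in A$. By Step 1, $\p(p)$ equals the output of {\fp}$(p,\pi,R)$, and $\p(p)=p$ because $p$ is a pivot. Since $p\notin R$, {\fp}$(p,\pi,R)$ can return $p$ only through a timeout, as $p \notin Q(p)$. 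But then, when {\pp}'s search from $u$ recurses into $p$, it replays $p$'s $R$-portion with at least as large a counter and hence also times out. This forces $u$ into a singleton, so $\p(u)=u\neq p$, a contradiction. Thus $p\in B$.

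The main obstacle is the bookkeeping around the global budget $\gamma$. A subcall embedded inside another search starts with a larger counter than a fresh call on the same node. We need monotonicity in both directions: a successful embedded subcall stays successful when started fresh (Step 1), and a fresh timeout persists when the call is embedded (Step 2). We would prove this by a short induction on the recursion: the sequence of calls is determined by $\pi$ and $R$, and only the moment of cutoff depends on the starting value of $\gamma$.
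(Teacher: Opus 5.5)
Your proof is correct and follows the same route as the paper. The forward direction rests on the fact that a non-null output of \texttt{FindPivot} is either a node of $R \subseteq A$ or $u$ itself after a timeout, and your Step~2 is the contrapositive of the paper's reverse direction, where a pivot in $A \setminus R$ must be a timed-out singleton that no other node can join. Your explicit treatment of the shared budget $\gamma$ (embedded versus fresh calls) makes rigorous what the paper uses implicitly when it concludes $u = v$ from ``$v$ forms a singleton cluster''.
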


\begin{proof}
We can just prove $u \in A$ if and only if $\p(u) \in A$; the second claim follows immediately.

If $u \in A$, we either have $\p(u) \in R \subseteq A$, or $u$ is placed in a singleton cluster, which means $\p(u) = u \in A$. 

For the other direction, if $\p(u) \in A$, we consider two cases. First, if $\p(u) \in R$, then $u \in A$, and we are done. Otherwise, let $v = \p(u) \in A \backslash R$. Since $v$ is a pivot, we have $\p(v) = v \not\in R$, which implies that $v$ forms a singleton cluster.  Consequently, $u = v \in A$. 
\end{proof}

The next lemma shows that nodes in \(B\) have small degrees, which will be useful for analyzing the variance of our estimator for their contribution to the clustering cost (see the proof of Lemma~\ref{lem:EB}). 

\begin{lemma} \label{lem:good-eb}
	With probability at least $(1 - n^{-3})$, for every node $u \in B$, it holds that $d_u \leq n^{\beta}$.
\end{lemma}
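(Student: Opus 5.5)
We need to pin down what $R$ is. Per the high-level description, $R$ consists of the $|R| = \Theta(k\, n^{1-\beta}\log n)$ highest-ranked nodes under $\pi$. We take $|R| = c\, k\, n^{1-\beta}\ln n$ for a suitable constant $c$, matching the stated $\tilde O(n^{1-\beta})$ size. The plan is to show the contrapositive: with high probability, every node $u$ with $d_u > n^{\beta}$ lies in $A$. A union bound over the at most $n$ such nodes then gives the claim, provided each fails with probability at most $n^{-4}$.

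\textbf{Step 1 (probabilistic).} Fix $u$ with $d_u > n^\beta$. Since $\pi$ is uniform, the set $R$ is a uniformly random subset of size $|R|$. For any fixed set $T$ of size $t$, the number of elements of $T$ landing in $R$ is hypergeometric with mean $t|R|/n$. Apply this with $T = N(u)$, where the mean is at least $n^{\beta}|R|/n = c k \ln n$. A Chernoff bound for the hypergeometric distribution (which is at least as concentrated as the binomial) gives
\[
\Pr\bigl[\abs{N(u)\cap R} < k\bigr] \le n^{-4}
\]
for $c$ large enough. So with probability at least $1-n^{-3}$, every node of degree greater than $n^\beta$ has at least $k$ neighbors in $R$.

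\textbf{Step 2 (deterministic).} Suppose $u$ has at least $k$ neighbors in $R$. We show that ${\fp}(u,\pi,R)\ne\Null$. First, if $u\in R$, then ${\fp}$ agrees with {\pp} on $u$, because every query issued from a node of $R$ only explores nodes of higher rank, and those are again in $R$ (since $R$ is a prefix of $\pi$). Hence ${\fp}$ returns a genuine pivot or times out, and never returns $\Null$.

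If $u\notin R$, then every node of $R$ outranks $u$, so $Q(u)\supseteq N(u)\cap R$ has size at least $k$. The loop in {\tt Rec-FindPivot} at $u$ can end in only three ways:
\begin{itemize}
\item some $v\in Q(u)$ returns $p=v$, so a pivot is found;
\item some call times out;
\item each of the at least $k$ iterations increments $\gamma$, so the budget is exhausted and a later call returns {\tt timeout}.
\end{itemize}
Returning $\Null$ at the top level requires iterating over all of $Q(u)$ with no timeout. That would push $\gamma\ge k$, and the next recursive call would then time out. The one exception is the very last iteration: there $\gamma$ reaches $k$ only upon the increment, and the subsequent call checks $\gamma\ge k$ and returns {\tt timeout}. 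So $\Null$ is impossible, and in every case $u\in A$.

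I expect the main obstacle to be this budget bookkeeping. Recursive subcalls also increment $\gamma$, so a subcall can itself return $\Null$ (an internal non-pivot node), and the loop simply continues. The point to check is that each loop iteration increments $\gamma$ by at least one before making its call, so after $k$ iterations the budget is necessarily exhausted. If the off-by-one fails, taking $k+1$ neighbors in $R$ suffices, and that only changes the constant $c$.
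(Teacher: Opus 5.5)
Your proof is correct and takes the same route as the paper: a Chernoff bound on the hypergeometric count $\abs{N(u)\cap R}$ plus a union bound, then the observation that having $k$ neighbors in $R$ forces {\fp} to either find a pivot or time out. With the algorithm's $r = 48kn^{1-\beta}\log n$ your constant can be taken as $c=48$, and your budget bookkeeping (exactly $k$ neighbors suffice) is more careful than the paper's one-line version; the one small slip is that recursive subcalls never return $\Null$, since they are made on nodes of $R$, which lie in their own $Q(\cdot)$, so they return a pivot or {\tt timeout}, but this does not affect your argument.
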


\begin{proof}
For any node $u$ with $d_u = \abs{N(u)} > n^{\beta}$, $\E[\abs{N(u) \cap R}] = \frac{r}{n} \cdot d_u > 48 k \log n$.  	
By a Chernoff bound (Lemma~\ref{lem:chernoff}), 
\begin{align*}
\Pr[\abs{N(u) \cap R} < k] 
&\leq \Pr\left[\abs{N(u) \cap R} < \frac{1}{2} \E[\abs{N(u) \cap R}]\right] \\ 
&\leq \exp\left(-\frac{\E[\abs{N(u) \cap R}]}{12}\right) \\
&\leq n^{-4}.
\end{align*}
By a union bound, with probability at least $(1 - n \cdot n^{-4}) = (1 - n^{-3})$, $\abs{N(u) \cap R} \geq k$ for all nodes $u \in V$ with $d_u > n^{\beta}$.  Running {\fp} on each such node $u$ triggers at most $k$ recursive calls in $R$, after which either $\p(u)$ is found or the query budget $k$ is exhausted, in which case $u$ is placed in a singleton cluster. In both cases, $u \in A$. Therefore, any node $u \in B$ must have degree at most $n^\beta$.
\end{proof}

\paragraph{Streaming Implementation of {\pp}} 
Algorithm~\ref{alg:find-pivot-s} implements {\pp} \cite{DMM24} in the streaming model by simulating the first $k$ recursive queries. We maintain the query path in a list $L$. Let $(x_i, y_i)$ be the $i$-th element in $L$, where $x_i$ is the $i$-th node on the query path. During each pass, for each $x_i$, we try to identify $y_i$, the next node to query for $x_i$ after $x_{i+1}$, which is $x_i$'s highest ranked neighbor with rank lower than $x_{i+1}$ (or $x_i$ itself, if no such neighbor exists).
In other words, $y_i$ would follow $x_{i+1}$ in $Q(x_i)$ in the recursive implementation (Line~\ref{ln:Q-def} of Algorithm~\ref{alg:find-pivot}). 
At the end of each pass, we update the query path in a bottom-up manner. 
For convenience, we say $x_{length(L) + 1} = \texttt{null}$ and $\pi(\texttt{null}) = -\infty$.

\begin{algorithm}[h]
	\caption{{\pp}$(\sigma, u, \pi)$}
	\label{alg:find-pivot-s}
	\DontPrintSemicolon
	\SetAlgoNoEnd
	
	\KwIn{data stream $\sigma = (\sigma_1, \ldots, \sigma_n)$, query node $u$, permutation $\pi: V \to [n]$}
	\KwOut{$\p(u)$}
	
	Initialize a list $L \gets \{(u,u)\}$
	
	\For{$\gamma \gets 1$ \KwTo $k$} {
		\gray{\tcp{Pass $\gamma$:}}
		\ForEach{incoming $\sigma_j$} {
			\For{$i \gets 1$ \KwTo $\text{length}(L)$} {
				$(x_i, y_i) \gets L[i]$,             
				$(x_{i+1}, y_{i+1}) \gets L[i+1]$

				\If{$x_i \sim \sigma_j$ and $\pi(x_{i+1}) < \pi(\sigma_j) < \pi(y_i)$ } {        
					$L[i] \gets (x_i, \sigma_j)$ 
				}
			}
		}
		\While{$(x, y) \gets $ last element in $L$ and $x = y$} {
			\lIf{$\text{length}(L) \le 2$} {
				\Return $x$
			}
			\Else {
				remove the last two elements in $L$
			}
		}
        $(x, y) \gets $ last element in $L$
        
		$L[length(L)] \gets (x, x)$
		
		append $(y,y)$ to the end of $L$
	}
	\Return $u$
	
\end{algorithm}

Let $(x,y)$ be the last element in $L$. Since $x$ is the current query node, if $x \neq y$ (or, $\pi(y) < \pi(x)$), then identifying $\p(x)$ requires first identifying $\p(y)$. Therefore, we simply add $y$ to the query path and start the next pass. Otherwise, all higher-ranked neighbors of $x$ are not pivots; we thus have $\p(x) =x$. Let $v$ be the predecessor of $x$ on the query path. Then $v$ is assigned to the cluster formed by $x$ (i.e. $\p(v)=x$). Since both $\p(x)$ and $\p(v)$ are identified as $x$, if the length of $L$ is at most 2, then we have $\p(u)=x$. Otherwise, we remove these two elements from the query path $L$ and repeat updating the query path.

Finally, if after $k$ passes the pivot of $u$ is still not identified, then we can make $u$ a singleton cluster and return $u$, since in this case the query budget has been exhausted. The space usage of Algorithm~\ref{alg:find-pivot-s} is bounded by $O(\text{length}(L)) = O(k)$.

\paragraph{The Main Algorithm}  
Our main algorithm is presented in Algorithm~\ref{alg:cc}, which uses Algorithm~\ref{alg:cc1} and \ref{alg:cc2} as subroutines.

\begin{algorithm}
	\caption{{\tt C$^4$Approx}$(\sigma, \eps)$}
	\label{alg:cc}
	\DontPrintSemicolon
	\SetAlgoNoEnd
	
	\KwIn{data stream $\sigma = (\sigma_1, \ldots, \sigma_n)$, parameter $\eps \in (0, 1)$}
	\KwOut{an $(O(1), \eps n^{1-\alpha})$-approx of optimal correlation clustering cost on the induced graph of $\sigma$}
	
	$r \gets 48 k n^{1 - \beta} \log n$, $R \gets \emptyset$
	
	$\pi \gets$ a random permutation of $[n]$

    \nlnonumber
	\gray{\tcp{Pass $1$:}}
	\ForEach{incoming node $\sigma_j$} {
\lIf{$\pi (\sigma_j) \leq r$} {
	$R \gets R \cup \{\sigma_j\}$
}
}

\nlnonumber
\gray{\tcp{Pass $2$ to $(4+k)$:}}
\DoParallel{
$\tilde{m}_A \gets ${\tt Est-EA}$(\sigma, \eps/8, \pi, R)$

$\tilde{m}_B \gets ${\tt Est-EB}$(\sigma, \eps/8, \pi, R)$
}

\Return $(\tilde{m}_A + \tilde{m}_B + \frac{3}{8}\eps n^{1-\alpha})/(1 - \frac{\eps}{8})$ \label{ln:C4-output}
\end{algorithm}

\begin{definition}
For a fixed reference set $R$, let $(A, B)$ be the node partition of $V$ induced by $R$, and let $E^{\mis}$ be the set of mismatch pairs (Definition~\ref{def:mismatch-pair}). We partition $E^{\mis}$ into two disjoint subsets $E^{\mis}_A$ and $E^{\mis}_B$, defined as 
\begin{align*}
\ea &= \{(u,v) \in E^{\mis} \mid u \in A \text{ or } v \in A\} \quad \text{and} \\
\eb &= \{(u,v) \in E^{\mis} \mid u \in B \text{ and } v \in B\}.
\end{align*}
\end{definition}

In Algorithm~\ref{alg:cc}, we first fix $R$ as the set of the $48kn^{1-\beta}\log n$ highest ranked nodes under $\pi$.  We then use Algorithm~\ref{alg:cc1} to estimate $\abs{E^{\mis}_A}$ and Algorithm~\ref{alg:cc2} to estimate $\abs{E^{\mis}_B}$; these two subroutines are discussed in the next two subsections.
Recall that $\abs{E^{\mis}_A} + \abs{E^{\mis}_B}$ is the cost of the clustering using {\pp}, which is a $(9 + \frac{24}{k-1})$-approximation of the cost of the optimal clustering with probability ${2}/{3}$ by Theorem~\ref{thm:pp}.  
 
The following two lemmas summarize the performance of Algorithm~\ref{alg:cc1} and Algorithm~\ref{alg:cc2}. 
We will prove them in Section~\ref{sec:alg-cc1} and Section~\ref{sec:alg-cc2}, respectively.

\begin{lemma}
\label{lem:EA}
With probability $1 - o(1)$, Algorithm~\ref{alg:cc1} outputs a $(1 \pm \eps, \pm \eps n^{1-\alpha})$-approximation of  $\abs{E^{\mis}_A}$;  it uses $O\left(\frac{1}{\eps^2}(n^{1-\beta} + n^{\alpha + \beta} ) \log n\right)$ words of space and $3$ passes.
\end{lemma}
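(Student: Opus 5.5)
We have not seen Algorithm~\ref{alg:cc1}, so the plan below reconstructs it from the high-level description and then proves Lemma~\ref{lem:EA} for that reconstruction.

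\paragraph{Step 1: an exact membership oracle for $\ea$ given $R$}
With $R$ stored, {\fp} computes $\p(u)$ exactly for every $u\in A$ without further passes, and it returns $\Null$ exactly when $u\in B$. For a pair $(u,v)$ with $u\in A$ we decide membership in $\ea$ as follows:
\begin{itemize}
\item If $v\in B$, then Lemma~\ref{lem:A-B-partition} gives $\p(v)\in B$ while $\p(u)\in A$, so $\p(u)\neq \p(v)$. Hence $(u,v)\in\ea$ iff $u\sim v$, which the similarity function answers once both nodes are in memory.
\item If $v\in A$, both pivots are known, so Definition~\ref{def:mismatch-pair} is checked directly.
\end{itemize}
So, given $R$ and the two nodes, membership in $\ea$ is decidable with no extra passes. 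Writing $d^{\mis}_A(u)=\abs{N^{\mis}_A(u)}$ for the degree in $G^{\mis}_A$, the target is $m_A=\abs{\ea}=\frac12\sum_u d^{\mis}_A(u)$.

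\paragraph{Step 2: high--low degree estimator}
Set a threshold $\tau = n^{1-\alpha-\beta}$ (to be tuned). Sample $S_1\subseteq V$ by including each node independently with probability $p_1=\Theta(\log n/(\eps^2 n^{\beta}))$, so $\abs{S_1}=O(n^{1-\beta}\log n/\eps^2)$ and it fits within the space bound. Sample a second set $S_2$ of $q=\Theta(n^{\alpha+\beta}\log n/\eps^2)$ nodes uniformly. The passes are used as follows:
\begin{itemize}
\item Pass 2 stores $S_1$ and $S_2$.
\item Pass 3 streams every $u$ and computes $c_u=\abs{N^{\mis}_A(u)\cap S_1}$ using the oracle from Step 1. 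If $c_u\ge \Theta(\log n/\eps^2)$, we declare $u\in H$ and add $c_u/p_1$ to a running sum $Z_H$. The algorithm also computes $c_u$ for each $u\in S_2$; a further pass or a parallel count gives $N^{\mis}_A(u)$ exactly for $u\in S_2\cap L$.
\end{itemize}

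\paragraph{Step 3: concentration for each part}
\begin{itemize}
\item \emph{High part.} For a fixed $u$, $c_u$ is a sum of independent indicators with mean $p_1 d^{\mis}_A(u)$. By Lemma~\ref{lem:chernoff} and a union bound over all $n$ nodes, every $u$ with $d^{\mis}_A(u)\ge \tau$ is classified as high and has $c_u/p_1=(1\pm\eps)d^{\mis}_A(u)$. Every node classified as high has degree at least $\tau/2$ and is likewise estimated within $(1\pm\eps)$. Hence $Z_H$ is a $(1\pm\eps)$ approximation of $\sum_{u\in H}d^{\mis}_A(u)$.
\item \emph{Low part.} The estimator is $Z_L=\frac{n}{q}\sum_{u\in S_2\cap L}d^{\mis}_A(u)$, where each term is at most $2\tau$. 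By Bernstein's inequality,
\[
\abs{Z_L-\E Z_L}\le \eps\,\E Z_L + O\!\Big(\frac{n\tau\log n}{q\eps}\Big),
\]
and with the stated $q$ the additive term is $O(\eps n^{1-\alpha})$.
\end{itemize}
Output $(Z_H+Z_L)/2$.

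\paragraph{Main obstacle}
The main difficulty is the boundary region $\tau/2\le d^{\mis}_A(u)\le \tau$, where classification into $H$ or $L$ is random. The fix is to make the classification depend only on $S_1$ and to sample $S_2$ independently of it. Then, conditioned on $S_1$, the sets $H$ and $L$ are fixed, and the two estimators are unbiased for their respective parts.

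The space count is $O(\abs{R}+\abs{S_1}+q)=O(\frac{1}{\eps^2}(n^{1-\beta}+n^{\alpha+\beta})\log n)$ words. The pass count is $3$, assuming the low-part degrees for $S_2$ are computed in the same pass as the $c_u$. Together with the two concentration bounds and the union bound, this yields the $(1\pm\eps,\pm\eps n^{1-\alpha})$ guarantee of Lemma~\ref{lem:EA} with probability $1-o(1)$.
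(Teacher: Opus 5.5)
\paragraph{Where the proof breaks}
Your plan follows the paper's: an exact $\ea$-oracle from $R$, a split of nodes by $c_u=\abs{N^{\mis}_A(u)\cap S_1}$, rescaling for the high part, and Bernstein on a uniform sample for the low part. However, your threshold $\tau=n^{1-\alpha-\beta}$ does not match your own sampling rates, and the low-part step fails as written.

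Your test $c_u\ge\Theta(\log n/\eps^2)$ with $p_1=\Theta(\log n/(\eps^2 n^{\beta}))$ fires at degree $\Theta(n^{\beta})$. This is smaller than $\tau$ by a factor $n^{1-\alpha-2\beta}$, which is $n^{2\beta}$ for the paper's $\beta=(1-\alpha)/4$. So the claim that every node classified high has degree at least $\tau/2$ is false.

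More seriously, the only degree bound you use for low nodes is $2\tau$. Putting it into your own Bernstein term gives
\[
\frac{n\tau\log n}{q\eps}=\eps n^{1-\alpha-\beta}\tau=\eps n^{2-2\alpha-2\beta},
\]
which exceeds the target $\eps n^{1-\alpha}$ by the same factor $n^{1-\alpha-2\beta}$. So the $(1\pm\eps,\pm\eps n^{1-\alpha})$ guarantee does not follow from your stated parameters.

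\paragraph{The fix}
Set $\tau=\Theta(n^{\beta})$ with cutoff $\approx p_1\tau$. Then, w.h.p.:
\begin{itemize}
\item every high node has degree $\ge\tau/2$, hence mean count $\Omega(\log n/\eps^2)$, which gives the $(1\pm\eps)$ estimate;
\item every low node has degree $\le 2\tau=O(n^{\beta})$, so each Bernstein summand is $O(n^{1+\beta})$;
\item $q=\Theta(n^{\alpha+\beta}\log n/\eps^2)$ yields additive error $O(\eps n^{1-\alpha})$.
\end{itemize}
This is exactly the paper's choice: $u$ is high iff $X_u\ge 2t_1/n^{1-\beta}$, low nodes have $d^{\mis}_A(u)\le 4n^{\beta}$ (Lemma~\ref{lem:good-ea}), and $Z_i\le 4n^{1+\beta}$.

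Alternatively, you could keep $\tau=n^{1-\alpha-\beta}$ and swap the sample sizes, $\abs{S_1}=\Theta(n^{\alpha+\beta}\log n/\eps^2)$ and $q=\Theta(n^{1-\beta}\log n/\eps^2)$. This stays within the same space bound.

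\paragraph{Your departure on $S_2$}
With the fix, your one real departure from the paper is fine. The paper reservoir-samples $S_2$ from $L$ in pass 2 and counts degrees in pass 3. You instead sample $S_2$ from all of $V$ and keep $S_2\cap L$.

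Your summands $n\,d^{\mis}_A(Y_i)\mathbf{1}[Y_i\in L]$ have the same mean $\sum_{u\in L}d^{\mis}_A(u)$ and the same range bound, so the same Bernstein computation applies. Conditioning on $S_1$ fixes $L$, as you note. Since $L$-membership of each $u\in S_2$ can be decided from the stored $S_1$, the degree counting can share a pass with computing the $c_u$.
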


\begin{lemma}
\label{lem:EB}
With probability $1 - o(1)$, Algorithm~\ref{alg:cc2} outputs a $(1\pm\eps, \pm\eps n^{1-\alpha})$-approximation of  $\abs{E^{\mis}_B}$; it uses $O\left(\frac{k}{\eps^2}n^{\alpha + 3\beta} \log n\right)$ words of space and $(k+3)$ passes.
\end{lemma}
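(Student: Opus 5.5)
We have not seen Algorithm~\ref{alg:cc2}, so we first posit its natural form, following the high-level description. In one pass it samples nodes $S_2$ of $V$ uniformly at rate $p$. It uses {\fp} (free, since $R$ is stored) to keep only sampled nodes in $B$. It then runs the streaming {\pp} (Algorithm~\ref{alg:find-pivot-s}) in parallel on each kept node for $k$ passes to find its pivot. It retains only those nodes that are themselves pivots, i.e.\ $\p(u)=u$, which amounts to sampling clusters of $\C(B)$ at rate $p$. For each sampled pivot $x$, one further pass stores the whole cluster $C_x=\{v : \p(v)=x\}$, since its members are neighbours of $x$. Membership $\p(v)=x$ for $v\in N(x)$ requires knowing $\p(v)$; this costs another round, or we note that $v\in N(x)$ with $\pi(x)<\pi(v)$ has $\p(v)=x$ unless some higher pivot intervenes, and checking that is again a {\pp} call, which we run in parallel for all $v\in N(x)\cap B$. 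A final pass counts, for each stored cluster $C$, the value
\[
X_C=\#\{\text{non-edges inside } C\}+\tfrac12\#\{(v,w): v\in C,\, w\in B\setminus C,\, v\sim w\}.
\]
Here membership $w\in B$ is tested with {\fp}, and $w\notin C$ follows because $C$ is stored in full. The output is $p^{-1}\sum_{C \text{ sampled}} X_C$. The pass count $1+k+1+1=k+3$ matches the statement. Lemma~\ref{lem:A-B-partition} guarantees that clusters lie wholly in $B$, so $\sum_{C\in\C(B)}X_C=\abs{E^{\mis}_B}$, and the estimator is unbiased.

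\emph{Variance.} We condition on the event of Lemma~\ref{lem:good-eb}, so every $u\in B$ has $d_u\le n^\beta$. Then every cluster in $B$ has size at most $n^\beta+1$, since it is a pivot plus its neighbours. This gives $X_C\le \abs{C}^2+\abs{C}n^\beta=O(n^{2\beta})$. Because clusters are sampled independently at rate $p$,
\[
\mathrm{Var}\le p^{-1}\sum_C X_C^2\le p^{-1}\cdot O(n^{2\beta})\sum_C X_C=O(n^{2\beta}/p)\cdot\abs{E^{\mis}_B}.
\]
Chebyshev gives deviation at most $\eps\abs{E^{\mis}_B}+\eps n^{1-\alpha}$ with constant probability as long as $\mathrm{Var}\le \eps^2\abs{E^{\mis}_B}\, n^{1-\alpha}$, up to constants; here we use $ab\le$ the appropriate mixture of $a^2,b^2$ through $\sqrt{\mathrm{Var}}\le \eps\sqrt{\abs{E^{\mis}_B} n^{1-\alpha}}\le \tfrac{\eps}{2}(\abs{E^{\mis}_B}+n^{1-\alpha})$. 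This requires $p=\Theta(n^{2\beta}/(\eps^2 n^{1-\alpha}))$. A median of $O(\log n)$ independent copies boosts the success probability to $1-o(1)$, which accounts for the $\log n$ factor. Combining with the $n^{-3}$ failure probability of Lemma~\ref{lem:good-eb} gives total success probability $1-o(1)$.

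\emph{Space.} The expected number of sampled nodes is $pn=O(n^{\alpha+2\beta}/\eps^2)$. Each runs {\pp} with $O(k)$ words, and each sampled pivot stores a cluster of size $O(n^\beta)$. If the cluster-membership step also invokes {\pp} on the $O(n^\beta)$ neighbours of each sampled pivot, the total becomes $O(k\, n^{\alpha+3\beta}/\eps^2)$, times $\log n$ for the repetitions. A Chernoff bound controls the sample size with high probability. This matches the statement, which suggests this is indeed the intended structure.

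The main obstacle is the cluster-recovery step, that is, correctly identifying $C_x$ within the pass budget. A neighbour $v$ of pivot $x$ with lower rank may still belong to an earlier pivot's cluster, so we cannot simply take $N(x)\cap\{\pi(v)>\pi(x)\}$. We must either compute $\p(v)$ for all $v\in N(x)$ via parallel {\pp} calls, or equivalently sample nodes and verify cluster membership. Fitting this into exactly $k+3$ passes likely needs care. We would collect $N(x)$ in the pass right after the pivot is identified, then run {\pp} on those neighbours, and overlap passes where possible, for instance by launching {\pp} on the whole sampled neighbourhood from the start. We expect the intended algorithm samples the pivot via a node $u$ and then gathers neighbours, so that the $n^{3\beta}$ term arises from $n^{2\beta}$ variance times $n^\beta$ neighbours per cluster, each running {\pp} in $O(k)$ space.
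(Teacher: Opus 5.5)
Your analysis is sound and follows the paper's outline. You write $|E^{\mathrm{mis}}_B|$ as a sum of per-cluster contributions (non-edges inside a cluster plus half the positive edges leaving it within $B$), and you sample clusters by sampling nodes of $B$. You then use Lemma~\ref{lem:good-eb} to bound cluster size by $n^\beta+1$ and each contribution by $O(n^{2\beta})$, and you charge $O(k)$ words per stored neighbour for the \texttt{PrunedPivot} calls.

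The one step that fails as you first wrote it is the pass count. Running \texttt{PrunedPivot} on the sampled node, then gathering its neighbourhood, then verifying membership costs $2k+3$ passes, not $1+k+1+1$. The fix you mention at the end (launching \texttt{PrunedPivot} on the whole neighbourhood) is exactly what Algorithm~\ref{alg:cc2} does, and it is straightforward: you never need to know in advance whether the sampled node $u$ is a pivot.
\begin{itemize}
\item Pass 2 stores $N(u)\cup\{u\}$ (recall $u\sim u$).
\item Passes $3$ to $k+2$ run \texttt{PrunedPivot} in parallel on all of these nodes.
\item A node $v$ is kept iff its pivot is $u$. What remains is ${\tt Clu}(u)$, which is empty when $u$ is not a pivot.
\end{itemize}
Every node of $B$, pivot or not, has degree at most $n^\beta$, so the space is still $O(k\,t\,n^\beta)$.

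The concentration step is where you genuinely differ. The paper reservoir-samples exactly $t=\frac{8}{\eps^2}n^{\alpha+2\beta}\log n$ nodes of $B$ and rescales by $n_B/t$. It then applies Bernstein (Lemma~\ref{lem:bernstein}) with $Z_i\le n^{1+2\beta}$ and $\Delta=\max(|E^{\mathrm{mis}}_B|,n^{1-\alpha})$, getting failure probability $2n^{-3}$ from a single estimator; the $\log n$ lives in $t$. You instead use Bernoulli sampling, Chebyshev for constant success probability, and a median of $O(\log n)$ copies; the space comes out the same.

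Bernstein gives a high-probability guarantee for the algorithm exactly as written, which outputs no median. Your route needs only the second-moment bound $\mathbf{E}[Z^2]\le \max Z\cdot\mathbf{E}[Z]$. Applied directly to the $t$ samples of Algorithm~\ref{alg:cc2}, that bound already gives failure probability at most $n^{\alpha+2\beta}/(4\eps^2 t)=1/(32\log n)=o(1)$, which is all the lemma claims. So you can state your argument for the paper's algorithm, which is what the lemma is about, rather than for a posited one.
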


These two lemmas directly lead to the following lemma.

\begin{lemma}
\label{lem:cc}
For any $\eps \in (0, 1)$, with probability $1 - o(1)$, Algorithm~\ref{alg:cc} computes a $\left(1 + \frac{\eps}{3}, \eps n^{1-\alpha}\right)$-approximation of $\abs{E^{\mis}}$; it uses $(k+4)$ passes and $O\left(\frac{k}{\eps^2} (n^{1-\beta} + n^{\alpha + 3\beta}) \log n \right)$
words of space.
\end{lemma}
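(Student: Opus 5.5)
Conditioned on the success events of Lemma~\ref{lem:EA} and Lemma~\ref{lem:EB}, the claim follows by algebra on the output in Line~\ref{ln:C4-output}. Both subroutines are called with parameter $\eps/8$. With probability $1-o(1)$ each (and $1-o(1)$ jointly by a union bound),
\[
(1-\tfrac{\eps}{8})\abs{\ea} - \tfrac{\eps}{8}n^{1-\alpha} \le \tilde m_A \le (1+\tfrac{\eps}{8})\abs{\ea} + \tfrac{\eps}{8}n^{1-\alpha},
\]
and the analogous bounds hold for $\tilde m_B$ with respect to $\abs{\eb}$. Since $\ea$ and $\eb$ partition $E^{\mis}$, adding the two gives
\[
(1-\tfrac{\eps}{8})\abs{E^{\mis}} - \tfrac{\eps}{4}n^{1-\alpha} \le \tilde m_A+\tilde m_B \le (1+\tfrac{\eps}{8})\abs{E^{\mis}} + \tfrac{\eps}{4}n^{1-\alpha}.
\]

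The output is $X=(\tilde m_A+\tilde m_B+\frac38\eps n^{1-\alpha})/(1-\frac\eps8)$. For the lower bound, the numerator is at least $(1-\frac\eps8)\abs{E^{\mis}}+\frac18\eps n^{1-\alpha}$, so $X\ge\abs{E^{\mis}}$. For the upper bound,
\[
X\le \frac{1+\eps/8}{1-\eps/8}\abs{E^{\mis}} + \frac{(5/8)\eps}{1-\eps/8}n^{1-\alpha}.
\]
For $\eps<1$ we have $1-\eps/8> 7/8$. Hence the additive term is at most $\frac{5}{7}\eps n^{1-\alpha}\le \eps n^{1-\alpha}$. The multiplicative factor is $1+\frac{\eps/4}{1-\eps/8}\le 1+\frac{2\eps}{7}\le 1+\frac{\eps}{3}$. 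This gives the $(1+\frac\eps3,\eps n^{1-\alpha})$-approximation.

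For resources, Pass~1 stores $R$, which takes $r=O(kn^{1-\beta}\log n)$ words. The two subroutines then run in parallel, so the number of passes is $1+\max(3,k+3)=k+4$. The space adds up: $O(kn^{1-\beta}\log n)$ for $R$, $O(\eps^{-2}(n^{1-\beta}+n^{\alpha+\beta})\log n)$ from Lemma~\ref{lem:EA}, and $O(k\eps^{-2}n^{\alpha+3\beta}\log n)$ from Lemma~\ref{lem:EB}, with $\eps/8$ changing only constants. Since $n^{\alpha+\beta}\le n^{\alpha+3\beta}$, the total is $O(\frac{k}{\eps^2}(n^{1-\beta}+n^{\alpha+3\beta})\log n)$.

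The only delicate point is checking that the constants in the rescaling yield exactly the factor $1+\eps/3$ and additive error $\eps n^{1-\alpha}$, and this works for all $\eps\in(0,1)$ as computed above. The probability statement is just the union bound over the two $o(1)$ failure events, together with the $n^{-3}$ failure event of Lemma~\ref{lem:good-eb} if Lemma~\ref{lem:EB} does not already account for it.
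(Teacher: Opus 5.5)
Your proposal is correct and follows essentially the same argument as the paper: you invoke Lemma~\ref{lem:EA} and Lemma~\ref{lem:EB} at parameter $\eps/8$, sum the two-sided bounds, and check that adding $\frac{3}{8}\eps n^{1-\alpha}$ and dividing by $1-\frac{\eps}{8}$ yields $\abs{E^{\mis}} \le X \le (1+\frac{\eps}{3})\abs{E^{\mis}} + \eps n^{1-\alpha}$; passes and space are then counted by the same additions. Your explicit use of $1-\frac{\eps}{8} > \frac{7}{8}$ simply spells out the constant check that the paper leaves implicit, and the extra $n^{-3}$ failure event you hedge about is already absorbed into Lemma~\ref{lem:EB}, whose proof conditions on Lemma~\ref{lem:good-eb}.
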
 

\begin{proof}
	The pass complexity directly follows from the description of Algorithm~\ref{alg:cc}, Lemma~\ref{lem:EA} and Lemma~\ref{lem:EB}. Since we need to store the reference set $R$ before estimating $\abs{\ea}$ and $\abs{\eb}$, the overall space complexity is obtained by adding the size of $R$ and the space complexities of Algorithm~\ref{alg:cc1} and Algorithm~\ref{alg:cc2}: 
	{\small
		$
		O\left(k n^{1-\beta} \log n +  \frac{1}{\eps^2}(n^{1-\beta} + n^{\alpha+\beta} ) \log n   +  \frac{k}{\eps^2}n^{\alpha + 3\beta} \log n\right) 
		=O\left(\frac{k}{\eps^2} (n^{1-\beta} + n^{\alpha + 3\beta}) \log n \right).
		$
	}
	
	We next prove the correctness. Algorithm~\ref{alg:cc} calls Algorithm~\ref{alg:cc1} and Algorithm~\ref{alg:cc2} with parameter $\eps/8$. By Lemma~\ref{lem:EA} and Lemma~\ref{lem:EB}, with probability $1 - o(1)$, both $\Tilde{m}_A$ and $\Tilde{m}_B$ are $(1\pm(\eps/8), \pm(\eps/8) n^{1-\alpha})$-approximations of $\abs{\ea}$ and $\abs{\eb}$, respectively. Consequently, the output of Algorithm~\ref{alg:cc} satisfies
	\begin{equation*}
		\frac{\Tilde{m}_A + \Tilde{m}_B + \frac{3}{8}\eps n^{1-\alpha}}{1 - \frac{\eps}{8}} \leq \frac{(1 + \frac{\eps}{8}) \abs{E^{\mis}} + \frac{5}{8}\eps n^{1-\alpha}}{1 - \frac{\eps}{8}} 
		\leq \left(1 + \frac{\eps}{3}\right) \abs{E^{\mis}} + \eps n^{1-\alpha}, \label{eq:upper}
	\end{equation*}
	and
	\begin{equation*}
		\frac{\Tilde{m}_A + \Tilde{m}_B + \frac{3}{8}\eps n^{1-\alpha}}{1 - \frac{\eps}{8}} \geq \frac{(1 - \frac{\eps}{8}) \abs{E^{\mis}} + \frac{1}{8}\eps n^{1-\alpha}}{1 - \frac{\eps}{8}} 
		\geq \abs{E^{\mis}}. \label{eq:lower}
	\end{equation*}
\end{proof}

Recall that we have chosen $\beta = \frac{1-\alpha}{4}$. Setting $k = 37$ and $\eps = \frac{1}{10}$, we get the following result.
\begin{theorem}[Main theorem]
\label{thm:main}
For any constant $\alpha \in [0, 1)$, there exists an algorithm that computes a $\left(O(1), n^{1- \alpha}\right)$-approximation of the cost of optimal correlation clustering with probability $0.99$ for the induced graph of a data stream of length $n$ in the node-arrival model. It uses $O(1)$ passes and $O(n^{\frac{3 + \alpha}{4}} \log n)$ words of space.
\end{theorem}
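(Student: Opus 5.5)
The main theorem follows by instantiating Lemma~\ref{lem:cc} with constant parameters and combining it with Theorem~\ref{thm:pp} through a union bound. The only real work is amplifying the $2/3$ success probability of Theorem~\ref{thm:pp} to $0.99$.

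\emph{Parameters and space.} Fix $k = 37$, $\eps = 1/10$ and $\beta = (1-\alpha)/4$. With these choices Lemma~\ref{lem:cc} uses $k+4 = 41 = O(1)$ passes. Its space bound is $O\big((n^{1-\beta} + n^{\alpha+3\beta})\log n\big)$. Substituting $\beta$ gives $1-\beta = \frac{3+\alpha}{4}$ and $\alpha + 3\beta = \alpha + \frac{3-3\alpha}{4} = \frac{3+\alpha}{4}$, so the two terms balance. The space is therefore $O(n^{(3+\alpha)/4}\log n)$, which is exactly why $\beta$ was chosen this way.

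\emph{Approximation guarantee.} Lemma~\ref{lem:cc} says that, with probability $1-o(1)$, the output $\tilde X$ satisfies
\[
\abs{E^{\mis}} \le \tilde X \le (1+\eps/3)\abs{E^{\mis}} + \eps n^{1-\alpha}.
\]
Theorem~\ref{thm:pp} with $k = 37$ gives, with probability $2/3$ over $\pi$,
\[
\mathrm{OPT} \le \abs{E^{\mis}} \le \left(9 + \tfrac{24}{36}\right)\mathrm{OPT}.
\]
Combining the two, with probability at least $2/3 - o(1)$,
\[
\mathrm{OPT} \le \tilde X \le \tfrac{31}{30}\cdot \tfrac{29}{3}\,\mathrm{OPT} + \tfrac{1}{10} n^{1-\alpha},
\]
which is an $(O(1), n^{1-\alpha})$-approximation. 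The lower bound $\mathrm{OPT} \le \tilde X$ holds whenever Lemma~\ref{lem:cc} succeeds, because the cost of any clustering is at least $\mathrm{OPT}$.

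\emph{Boosting to $0.99$.} This is the step I expect to need care. I would run $t = O(1)$ independent copies in parallel, each with its own permutation $\pi$, and output the minimum of the $t$ estimates.

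The lower bound is what must hold in every copy. Each copy's estimate is at least $\mathrm{OPT}$ whenever that copy's Lemma~\ref{lem:cc} event holds, which happens with probability $1-o(1)$ per copy. A union bound over the constant number $t$ of copies keeps this at $1-o(1)$ for all copies simultaneously.

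The upper bound only needs one good copy: the minimum is at most the estimate from any copy whose $\pi$ satisfies Theorem~\ref{thm:pp}. The probability that none of the $t$ copies has a good $\pi$ is at most $(1/3)^t$, and choosing $t = 5$ makes this below $0.005$.

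Adding the $o(1)$ failure terms for $n$ large, the total success probability is at least $0.99$. Since $t$ is a constant, the number of passes stays $O(1)$ and the space grows only by a constant factor, which completes the proof.
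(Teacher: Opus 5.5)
Your proof is correct, and its core matches the paper's proof. Both fix $k=37$, $\eps=1/10$, $\beta=(1-\alpha)/4$ and combine Lemma~\ref{lem:cc} with Theorem~\ref{thm:pp}. This gives a per-run $(10, n^{1-\alpha})$-approximation with probability $2/3-o(1)$, which is then amplified by $O(1)$ parallel repetitions.

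The one genuine difference is the amplification step.

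\emph{Paper (median).} The paper takes the median of the repetitions. This uses only that each run lands in $[\mathrm{OPT}, 10\,\mathrm{OPT}+n^{1-\alpha}]$ with probability at least $0.6>1/2$. It is fully generic, but at per-run success $0.6$ it needs a Chernoff-type majority argument and a fairly large constant number of copies.

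\emph{You (minimum).} You take the minimum and exploit the one-sided error structure. The lower bound $\tilde X \ge \abs{E^{\mis}} \ge \mathrm{OPT}$ depends only on the $1-o(1)$ event of Lemma~\ref{lem:cc}, because \texttt{PrunedPivot}'s cost can never fall below $\mathrm{OPT}$. So you can union-bound it over all copies. The $1/3$ failure probability of Theorem~\ref{thm:pp} threatens only the upper bound, and a single good permutation suffices for it. With five copies the total failure is at most $3^{-5}+o(1)$.

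\emph{Trade-off.} Your route is tighter and more explicit about why it works. The median route does not rely on knowing which side a failed run can err on.
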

\begin{proof}
	Let ${\sf OPT}$ be the cost of the optimal clustering.
	We first show that Algorithm~\ref{alg:cc} computes a $\left(10, n^{1-\alpha}\right)$-approximation of ${\sf OPT}$ with probability $0.6$. Then by $O(1)$ parallel repetitions and taking the median, we can boost the success probability to $0.99$.
	
	By Theorem~\ref{thm:pp} (setting $k = 37$), we have with probability at least $\frac{2}{3}$,
	\begin{align}
		{\sf OPT} \leq \abs{E^{\mis}} \leq \frac{29}{3} {\sf OPT}. \label{eq:e-range}
	\end{align}
	
	Conditioned on \eqref{eq:e-range}, by Lemma~\ref{lem:cc} (setting $\eps = \frac{1}{10}$), we have with probability $1 - o(1)$,
	\begin{align*}
		\frac{\Tilde{m}_A + \Tilde{m}_B + \frac{3}{8}\eps n^{1-\alpha}}{1 - \frac{\eps}{8}} &\leq \left(1 + \frac{\eps}{3}\right)\abs{E^{\mis}} + \eps n^{1-\alpha} \\
		&\leq \left(1 + \frac{1}{30}\right) \cdot \frac{29}{3} \cdot {\sf OPT} +  n^{1-\alpha} \\
		&\leq 10 \cdot {\sf OPT} + n^{1-\alpha},
	\end{align*}
	and
	\begin{align*}
		\frac{\Tilde{m}_A + \Tilde{m}_B + \frac{3}{8}\eps n^{1-\alpha}}{1 - \frac{\eps}{8}} \geq \abs{E^{\mis}} \geq {\sf OPT}.
	\end{align*}
	Therefore, Algorithm~\ref{alg:cc} computes a $\left(10, n^{1-\alpha}\right)$-approximation of ${\sf OPT}$ with probability at least $\frac{2}{3} - o(1) \ge 0.6$.
\end{proof}

\subsection{Description of Algorithm~\ref{alg:cc1} {\tt Est-EA}}
\label{sec:alg-cc1}

Let $G_A^{\mis} = (V, \ea)$ be the subgraph of $G^{\mis}$ that only contains the edges from $\ea$. For each node $u$, let $N_A^{\mis}(u)$ be the set of neighbors of $u$ in $G_A^{\mis}$, and $d_A^{\mis}(u) = \abs{N_A^{\mis}(u)}$ be the degree of $u$ in $G_A^{\mis}$. Clearly, $\abs{\ea} = \frac{1}{2} \sum_{u \in V} d_A^{\mis}(u)$.

In Algorithm~\ref{alg:cc1}, we first draw a sample $S_1$ of size $t_1$ from $V$ (Pass $1$). Next, for each node $\sigma_j \in V$, we compute the intersection size between $N^{\mis}_A(\sigma_j)$ and $S_1$ (Pass $2$, Line~\ref{ln:Xj}). We do this by calling a subroutine (Algorithm~\ref{alg:inEA}) on each $u \in S_1$ to determine if $(u, \sigma_j) \in \ea$.  If the intersection is large enough, $d^{\mis}_A(\sigma_j)$ can be accurately estimated by rescaling the intersection size by $n/t_1$, and the estimate is then added to the sum (Line~\ref{ln:high}). Otherwise, $d^{\mis}_A(\sigma_j)$ must be small. Let $L$ be the set of such nodes. We draw another sample $S_2$ from $L$ (Pass $2$, Line~\ref{ln:S2-1}-\ref{ln:S2-2}), and compute the degrees of nodes in $S_2$ (w.r.t.\ $\ea$) exactly (Pass $3$, Line~\ref{ln:D}). 
The sum of the degrees of nodes in $S_2$, after rescaling, provides an accurate estimate of $\sum_{u \in L} d^{\mis}_A(u)$, because nodes in $L$ have small degrees.  
The final output equals one half of the sum of the contributions from high-degree nodes and low-degree nodes.

The following lemma shows that a sublinear size sample suffices to partition the nodes in $V$ to high and low degree sets w.r.t.~$d^{\mis}_A(\cdot)$, which is needed for making the decision at Line~\ref{ln:high} of Algorithm~\ref{alg:cc1}.  

\begin{lemma}
	\label{lem:good-ea}
	For any node $u \in V$, let $X_u$ denote $\abs{N^{\mis}_A(u) \cap S_1}$, and $t_1$ be defined at Line~\ref{ln:EA-para} of Algorithm~\ref{alg:cc1}. With probability at least $1 - O(n^{-4})$, the following holds:  (1) if $X_u \geq \frac{2t_1}{n^{1-\beta}}$, then $(1-\eps)\da(u) \leq \frac{nX_u}{t_1} \leq (1+\eps)\da(u)$; (2) if $X_u < \frac{2t_1}{n^{1-\beta}}$, then $\da(u) \leq 4n^{\beta}$.
\end{lemma}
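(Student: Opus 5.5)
We treat $\pi$ and $R$ (hence $A$, $B$, and the mismatch graph $G^{\mis}_A$) as fixed, and use only the randomness of $S_1$. We assume $S_1$ consists of $t_1$ nodes drawn uniformly and independently from $V$, with replacement. We also assume that $t_1$ at Line~\ref{ln:EA-para} is chosen of order $\frac{c}{\eps^2} n^{1-\beta}\log n$ for a sufficiently large constant $c$. This matches the $\frac{1}{\eps^2}n^{1-\beta}\log n$ term in the space bound of Lemma~\ref{lem:EA}. Then $X_u$ is a sum of $t_1$ independent Bernoulli variables with success probability $\da(u)/n$, so
\[
\mu_u := \E[X_u] = \frac{t_1\,\da(u)}{n}.
\]
Both claims will follow from Chernoff bounds (Lemma~\ref{lem:chernoff}) applied to each fixed $u$, with failure probability $O(n^{-5})$ per node, and a union bound over all $n$ nodes. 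The threshold $\frac{2t_1}{n^{1-\beta}}$ equals $\mu_u$ exactly when $\da(u)=2n^{\beta}$.

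We split the nodes by degree. Call $u$ heavy if $\da(u)\ge n^{\beta}$. For a heavy node, $\mu_u \ge t_1/n^{1-\beta} = \frac{c}{\eps^2}\log n$, and the two-sided Chernoff bound gives
\[
\Pr\bigl[\abs{X_u-\mu_u} > \eps\mu_u\bigr] \le 2\exp\!\left(-\frac{\eps^2\mu_u}{3}\right) \le n^{-5}
\]
once $c\ge 15$. Rescaling by $n/t_1$ shows that every heavy node satisfies $(1\pm\eps)\da(u)$-accuracy, whatever the value of $X_u$. In particular, claim (1) holds for heavy nodes.

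Call $u$ light if $\da(u)< n^{\beta}$. For claim (1) it suffices to show that a light node essentially never reaches the threshold. For such $u$ we have $\mu_u < t_1/n^{1-\beta}$, so the threshold $2t_1/n^{1-\beta}$ is at least twice the mean. We handle this by monotonicity: dominate $X_u$ by a binomial variable with mean exactly $t_1/n^{1-\beta}$, and apply the upper-tail Chernoff bound with $\delta=1$. This gives probability at most $\exp\!\left(-\frac{t_1}{3n^{1-\beta}}\right) \le n^{-5}$. So with high probability, every node with $X_u\ge 2t_1/n^{1-\beta}$ is heavy, and its estimate is accurate.

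For claim (2), take $u$ with $\da(u) > 4n^{\beta}$. Then $\mu_u > 4t_1/n^{1-\beta}$, so $X_u < 2t_1/n^{1-\beta}$ requires $X_u < \mu_u/2$. The lower-tail Chernoff bound bounds this by $\exp(-\mu_u/8) \le n^{-5}$. By the contrapositive, $X_u$ below the threshold implies $\da(u)\le 4n^{\beta}$.

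A union bound over all $u$ and the three events gives total failure probability $O(n^{-4})$. The main obstacle is the light-node case of claim (1). Its mean can be arbitrarily small, so we cannot use a relative-error bound there. The domination argument, which increases the mean up to $t_1/n^{1-\beta}$, is the step to state carefully. A secondary point is that if $S_1$ is instead sampled without replacement, we need Chernoff bounds for negatively associated (hypergeometric) variables, which hold equally well.
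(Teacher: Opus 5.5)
Your proposal is correct and follows the paper's proof essentially step for step. It uses the same heavy/light split at $\da(u)=n^{\beta}$, an upper-tail Chernoff bound showing light nodes essentially never reach $\frac{2t_1}{n^{1-\beta}}$, a two-sided Chernoff bound for heavy nodes, and a lower-tail bound for nodes with $\da(u)>4n^{\beta}$. Your explicit domination step makes rigorous what the paper uses implicitly when it applies Chernoff with only an upper bound on the mean.

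One small calibration: the lemma is a per-node statement. With the paper's actual $t_1=\frac{12}{\eps^2}n^{1-\beta}\log n$, your per-node bounds already give failure probability $O(n^{-4})$; the heavy case gives exactly $2n^{-4}$. So you need neither $c\ge 15$ nor a union bound over nodes here. That union bound is taken later, in the proof of Lemma~\ref{lem:EA}, where it yields $1-O(n^{-3})$.
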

\begin{proof}
	Since $S_1$ are drawn without replacement from $V$, we have $\E[X_u] = \frac{t_1 }{n}\da (u)$. 
	
	We start by proving the first item. Note that if $\da(u) < n^{\beta}$, then $\E[X_u] < \frac{t_1}{n^{1-\beta}}$. By a Chernoff bound,
	\begin{equation*}
		\Pr\left[X_u \geq \frac{2t_1}{n^{1-\beta}}\right] \leq e^{-\frac{t_1}{3n^{1-\beta}}} \leq n^{-4},
	\end{equation*}
	which implies that with probability $1 - n^{-4}$, if $X_u \geq \frac{2t_1}{n^{1-\beta}}$, then $\da(u) \geq n^{\beta}$. 
	
	On the other hand, if $\da(u) \geq n^{\beta}$, then $\E[X_u] \geq \frac{t_1}{n^{1-\beta}}$. By another Chernoff bound, we have
	\begin{equation*}
		\Pr\left[\abs{X_u - \E[X_u]} \geq \eps \E[X_u]\right] \leq 2\exp\left(-\frac{\eps^2 \E[X_u]}{3}\right) \\
		\leq 2\exp\left(-\frac{\eps^2 t_1}{3 n^{1-\beta}}\right) = 2n^{-4},
	\end{equation*}
	which implies that when $\da(u) \geq n^{\beta}$, $X_u$ is a $(1 \pm \eps, 0)$-approximation of $\E[X_u] = \frac{t_1 }{n}\da(u)$ with probability $1 - 2 n^{-4}$. By a union bound, with probability $1 - O(n^{-4})$, if $X_u \geq \frac{2t_1}{n^{1-\beta}}$, then $\frac{nX_u}{t_1}$ is a $(1 \pm \eps, 0)$-approximation of $\da(u)$.
	
	For the second item, note that if $\da(u) > 4n^{\beta}$, then $\E[X_u] > \frac{4t_1}{n^{1-\beta}}$. By a Chernoff bound,
	\begin{equation*}
		\Pr\left[X_u < \frac{2t_1}{n^{1-\beta}}\right] \leq e^{-\frac{t_1}{3n^{1-\beta}}} \leq n^{-4},
	\end{equation*}
	which implies that with probability $1 - n^{-4}$, if $X_u < \frac{2t_1}{n^{1-\beta}}$, then $\da(u) \leq 4n^{\beta}$. 
\end{proof}

\begin{algorithm}[t]
\caption{{\tt Est-EA}$(\sigma, \eps, \pi, R)$}
\label{alg:cc1}
\DontPrintSemicolon
\SetAlgoNoEnd

\KwIn{data stream $\sigma = (\sigma_1, \ldots, \sigma_n)$, parameter $\eps$, permutation $\pi : V \to [n]$, highest ranked nodes $R$}
\KwOut{a $(1 \pm \eps, \pm \eps n^{1-\alpha})$-approximation of $\abs{\ea}$}

$t_1 \gets \frac{12}{\eps^2} n^{1-\beta} \log n$, $t_2 \gets \frac{32}{\eps^2} n^{\alpha + \beta} \log n$, $n_\ell \gets 0$, $\tilde{d}_A \gets 0$\label{ln:EA-para}

Initialize arrays $S_1[1..t_1] \gets \Null$, $S_2[1..t_2]  \gets \Null$, and $D[1..t_2] \gets 0$

\nlnonumber
\gray{\tcp{Pass $1$:}}

$S_1 \gets $ Reservoir-sample $t_1$ nodes from $\sigma$

\nlnonumber
\gray{\tcp{Pass $2$:}}

\ForEach{incoming $\sigma_j$} {

$X_j \gets \abs{\{u \in S_1 \mid \text{{\tt In-EA}}(u, \sigma_j, \pi, R)\}}$ \label{ln:Xj}

\lIf{$X_j \geq \frac{2t_1}{n^{1-\beta}}$} {
$\tilde{d}_A \gets \tilde{d}_A + \frac{nX_j}{t_1}$ \label{ln:high}
}

\Else{
$n_\ell \gets n_\ell + 1$ \label{ln:S2-1}

\lIf{$n_\ell \leq t_2$} { 
$S_2[n_\ell] \gets \sigma_j$
}
\Else {
$i \gets $ random integer from $[n_\ell]$

\lIf{$i \leq t_2$} {
	$S_2[i] \gets \sigma_j$ \label{ln:S2-2}
}
}
}
}

\nlnonumber
\gray{\tcp{Pass $3$:}}

\ForEach{incoming $\sigma_j$} {

\For{$i \gets 1$ \KwTo $t_2$} {
\lIf{$\text{{\tt In-EA}}(S_2[i], \sigma_j, \pi, R)$\label{ln:EA-check-2}} {
$D[i] \gets D[i] + 1$ \label{ln:D}
}
}
}

$\tilde{d}_A \gets \tilde{d}_A + \frac{n_\ell}{t_2} \sum_{i \in [t_2]} D[i]$ 

\Return $\tilde{d}_A / 2$

\end{algorithm}
\begin{algorithm}[t]
    \caption{{\tt In-EA}$(u, v, \pi, R)$}
    \label{alg:inEA}
    \DontPrintSemicolon
    \SetAlgoNoEnd
    
    \KwIn{nodes $u,v$, permutation $\pi: V \to [n]$, subset $R \subseteq V$}
    \KwOut{whether $(u,v) \in \ea$ }  
    
    $p_u \gets \text{\fp}(u, \pi, R)$\;
    
    $p_v \gets \text{\fp}(v, \pi, R)$\;
    
    \If{$p_u = p_v = \Null$} {
        \Return ${\tt false}$
    }
    \ElseIf{$p_u = \Null$ or $p_v = \Null$} {
        \lIf{$u \sim v$} {
            \Return ${\tt true}$
        }
    }
    \ElseIf {$(p_u = p_v) \land (u \nsim v)$ or $(p_u \neq p_v) \land (u \sim v)$} {
        \Return ${\tt true}$
    }
    
    \Return ${\tt false}$
\end{algorithm}

The following lemma states that we can use Algorithm~\ref{alg:inEA} to check whether a pair of nodes form an edge in $E^{\mis}_A$  when the reference set $R$ is stored, enabling Line~\ref{ln:Xj} and Line~\ref{ln:EA-check-2} in Algorithm~\ref{alg:cc1}.  

\begin{lemma}
\label{lem:ea-check}
Given the reference set $R$, for any two nodes $u, v \in V$, Algorithm~\ref{alg:inEA} determines whether $(u, v) \in E^{\mis}_A$.
\end{lemma}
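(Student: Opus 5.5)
The plan is a case analysis on whether $u$ and $v$ lie in $A$ or $B$. The key facts are two. First, for $w \in A$, {\fp}$(w,\pi,R)$ returns exactly $\p(w)$, the {\pp} pivot. Second, for $w \in B$ it returns \Null. The second fact is the definition of the partition. For the first, I will argue as in the discussion after Algorithm~\ref{alg:find-pivot}.

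The first fact needs care because {\fp} and {\pp} follow the same search path until {\fp} hits Case (3). On a node in $R$, $Q(\cdot)$ restricted to $R$ is the full set of higher-ranked neighbors: $R$ is a prefix of $\pi$, so every node ranked above a node of $R$ is itself in $R$. Hence every recursive call that {\fp} makes on a node of $R$ coincides with {\pp}'s call. At the top level with $u\notin R$, {\fp} visits the higher-ranked neighbors that lie in $R$, and these are exactly the first ones {\pp} would visit. If {\fp} finds the pivot there or times out, {\pp} does the same, with the same global counter $\gamma$. So whenever {\fp} returns a non-\Null{} value, that value equals $\p(w)$. This includes the timeout case, where both place $w$ in a singleton, giving $\p(w)=w$.

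With these facts, I go through the branches of Algorithm~\ref{alg:inEA}.

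\emph{Both in $B$.} Both calls return \Null, and by definition $(u,v)\notin\ea$, so the answer \texttt{false} is correct.

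\emph{Both in $A$.} Both calls return the true pivots, and the final branch tests exactly Definition~\ref{def:mismatch-pair}. Since $u\in A$, membership in $E^{\mis}$ is the same as membership in $\ea$.

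\emph{Exactly one in $A$, say $u\in A$ and $v\in B$.} By Lemma~\ref{lem:A-B-partition}, $\p(u)\in A$ and $\p(v)\in B$, so $\p(u)\neq\p(v)$. The pair is therefore a mismatch iff $u\sim v$, which is what the middle branch returns. Because $u\in A$, a mismatch here lies in $\ea$.

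The main obstacle is the first fact: that the early-stopping bookkeeping ($\gamma$, timeouts) in {\fp} matches {\pp} exactly whenever {\fp} does not return \Null. In particular, the timeout budget must be consumed identically, and a \Null{} from a deeper recursive call must not be able to occur. I would handle this by noting that all recursive calls are on nodes of $R$, and that for such nodes $Q(\cdot)$ is the complete list of higher-ranked neighbors. A recursive call therefore never returns \Null{} improperly, because a node in $R$ with no pivot among its higher neighbors returns itself.
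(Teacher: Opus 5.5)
Your proof is correct and follows the paper's argument. Both use the same case split on whether each {\fp} call returns \Null{}, and both handle the mixed case with Lemma~\ref{lem:A-B-partition}. Your extra argument that {\fp} returns the true {\pp} pivot on $A$ fills in a step the paper only states informally after Algorithm~\ref{alg:find-pivot}: $R$ is a prefix of $\pi$, so $Q(\cdot)$ is unrestricted on $R$, recursive calls never return \Null{}, and the budget $\gamma$ is used up identically.
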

\begin{proof}
	Let $p_u$ and $p_v$ be the outputs of Algorithm~\ref{alg:find-pivot} with query nodes $u$ and $v$. We analyze four cases:
	\begin{enumerate}
		\item {\em $p_u = p_v = \Null$}: in this case, $u, v \in B$ and thus $(u,v) \notin \ea$.  
		
		\item {\em $p_u \neq \Null$ and $p_v = \Null$}: in this case, $u \in A$ and $v \in B$. By Lemma~\ref{lem:A-B-partition}, $u$ and $v$ must belong to different clusters. Therefore, $(u, v) \in E^{\mis}_A$ if and only if $u \sim v$, which can be tested given $u$ and $v$. 
		
		\item  {\em $p_u = \Null$ and $p_v \neq \Null$}: this case is symmetric to the second case.
		
		\item  {\em $p_u \neq \Null$ and $p_v \neq \Null$}: in this case, $(u, v) \in E^{\mis}$ if and only if one of the followings holds: (1) $p_u = p_v$ and $u \nsim v$, or (2) $p_u \neq p_v$ and $u \sim v$.  Both cases can be easily checked given $u, v$ and $p_u, p_v$.
	\end{enumerate}
	The lemma follows.
\end{proof}

\vspace{2mm}
\noindent{\bf Proof of Lemma~\ref{lem:EA}.\ }
With Lemma~\ref{lem:good-ea} and Lemma~\ref{lem:ea-check} in hand, we can now prove Lemma~\ref{lem:EA}.
\begin{proof}
    By Lemma~\ref{lem:good-ea} and a union bound, with probability $1 - O(n^{-4}) \cdot n = 1 - O(n^{-3})$, for every node $u$ with $X_u \geq \frac{2t_1}{n^{1-\beta}}$, we obtain a $(1 \pm \eps)$-approximation of $\da(u)$ and add it to $\tilde{d}_A$ (Line~\ref{ln:high} of Algorithm~\ref{alg:cc1}).  
	
	It remains to approximate the total degree of nodes in $L = \{u \in V \mid X_u < \frac{2t_1}{n^{1-\beta}}\}$.  Again by Lemma~\ref{lem:good-ea} and a union bound, with probability $1 - O(n^{-3})$, every node $u \in L$ has degree $\da(u) \leq 4n^{\beta}$. 
	
	Let $\da(L) = \sum_{u \in L} \da(u)$, and $t_2$ be defined at Line~\ref{ln:EA-para} of Algorithm~\ref{alg:cc1}. For each $i \in [t_2]$, let $Y_i$ be the $i$-th sample in $S_2$, and $Z_i = \abs{L} \cdot \da(Y_i)$. We have, 
	$\E[Z_i] = \abs{L}^{-1}\sum_{u \in L} \abs{L} \cdot \da(u) = \da(L),$
	and $Z_i \leq n \cdot (4n^{\beta}) = 4n^{1+\beta}$.  
	
	Let $\Delta = \max(\da(L), n^{1-\alpha})$. By  Bernstein's inequality (Lemma~\ref{lem:bernstein}),
	\begin{align*}
		\Pr\left[\abs{\frac{1}{t_2} \sum_{i \in [t_2]}Z_i - \da(L)} \geq \eps \Delta \right] 
		&\leq 2 \exp\left(-\frac{t_2 \eps^2 \Delta^2}{4n^{1+\beta}(2\da(L) + (2/3)\eps\Delta )}\right) \\
		&\leq 2 \exp\left(-\frac{t_2 \eps^2 n^{1-\alpha}}{4n^{1+\beta}(2 + (2/3))}\right) \\
        &= 2n^{-3}.
	\end{align*}
	
	It follows that with probability $1 - O(n^{-3})$, $\frac{1}{t_2} \sum_{i \in [t_2]} Z_i$ is a $(1 \pm \eps, \pm \eps n^{1-\alpha})$-approximation of $\da(L)$. Consequently, $\tilde{d}_A$ is a $(1 \pm \eps, \pm \eps n^{1-\alpha})$-approximation of $\sum_{u \in V} \da(u) = 2\abs{E^{\mis}_A}$. The correctness of the lemma follows.
	
	Finally, the space usage of Algorithm~\ref{alg:cc1} is dominated by the sizes of samples $S_1$, $S_2$, and the array $D$. Their total space usage is bounded by $O(t_1 + t_2) = O\left(\frac{1}{\eps^2}(n^{1-\beta} + n^{\alpha+\beta} ) \log n\right)$ words.
\end{proof}

\subsection{Description of  Algorithm~\ref{alg:cc2} {\tt Est-EB}}
\label{sec:alg-cc2}

\begin{algorithm}[!t]
	\caption{{\tt Est-EB}$(\sigma, \eps, \pi, R)$}
	\label{alg:cc2}
	\DontPrintSemicolon
	\SetAlgoNoEnd
	
	\KwIn{data stream $\sigma = (\sigma_1, \ldots, \sigma_n)$, parameter $\eps$, permutation $\pi : V \to [n]$, highest ranked nodes $R$}
	\KwOut{a $(1 \pm \eps, \pm \eps n^{1-\alpha})$-approximation of $\abs{\eb}$}		
		$t \gets \frac{8}{\eps^2}n^{\alpha + 2 \beta} \log n$, $n_B \gets 0$

		Initialize arrays $S[1..t] \gets \Null$, $\Nin[1..t] \gets 0$, $\Nout[1..t] \gets 0$, and $\Gamma[1..t] \gets \emptyset$

		\nlnonumber
		\gray{\tcp{Pass $1$:}}
		
		\ForEach{incoming $\sigma_j$} {
			$p_j \gets \text{\tt FindPivot}(\sigma_j, \pi, R)$
			
			\If{$p_j = \Null$} {
				$n_B \gets n_B + 1$
				
				\lIf{$n_B \leq t$} {
					$S[n_B] \gets \sigma_j$
				}
				\Else {
					$i \gets $ random integer from $[n_B]$
					
					\lIf{$i \leq t$} {
						$S[i] \gets \sigma_j$
					}
				}
			}
		}
		
		\nlnonumber
		\gray{\tcp{Pass $2$: }}
		
		\ForEach{incoming $\sigma_j$} {
			\For{$i \gets 1$ \KwTo $t$} {
				\lIf{$S[i] \sim \sigma_j$} {
					$\Gamma[i] \gets \Gamma[i] \cup \{\sigma_j\}$
				}
			}
		}

		\nlnonumber
		\gray{\tcp{Pass $3$ to Pass $(k+2)$: }}
		\DoParallelFor{
			$p_v \gets \text{\pp}(\sigma, v, \pi)$ \label{ln:find-pivot}
			
			\lIf{$p_v \neq S[i]$} {
				$\Gamma[i] \gets \Gamma[i] \setminus \{v\}$ \label{ln:prune}
			}
		}
        
		\nlnonumber
		\gray{\tcp{Pass $(k+3)$: }}
		
		\ForEach{incoming $\sigma_j$} {
			$p_j \gets \text{\tt FindPivot}(\sigma_j, \pi, R)$
			
			\If{$p_j = \Null$} {
				\For{$i \gets 1$ \KwTo $t$} {
					\ForEach{$v \in \Gamma[i]$} {
						\If{$\sigma_j \notin \Gamma[i]$ and $v \sim \sigma_j$} {
							$\Nout[i] \gets \Nout[i] + 1$
						}
					}
				}
			}
		}

		\For{$i \gets 1$ \KwTo $t$} {
			$\Nin[i] \gets \abs{\{(v_1, v_2) \in \Gamma[i] \times \Gamma[i] \mid v_1 \nsim v_2\}}$
		}
		\Return $\frac{n_B}{t} \sum_{i \in [t]}\left(\Nin[i] +  \frac{1}{2}\Nout[i] \right)$
\end{algorithm}

Let $G_B^{\mis} = (V, \eb)$ be the subgraph of $G^{\mis}$ that only contains the edges from $\eb$. For each node $u$,  we define the cluster of $u$ as 
\[
{\tt Clu}(u) = \begin{cases} 
	\{v \in N(u) \cup \{u\} \mid \p(v) = u\} & \text{if } \p(u) = u, \\
	\emptyset & \text{otherwise.}
\end{cases}
\] 
By Lemma~\ref{lem:A-B-partition}, if $u \in B$, then we always have ${\tt Clu}(u) \subseteq B$.  It follows that $B$ is a disjoint union of clusters. We estimate $\abs{\eb}$ by estimating the intra-cluster edges and inter-cluster edges in $E^{\mis}_B$ separately.

For each node $u \in B$, let
\begin{align*}
\Ein(u) &= \{(v,w) \in \eb \mid v \in {\tt Clu}(u) \text{ and } w \in {\tt Clu}(u)\}, \\
\Eout(u) &= \{(v,w) \in \eb \mid v \in {\tt Clu}(u) \text{ and }  w \notin {\tt Clu}(u)\}.
\end{align*}
Clearly, $\abs{\eb} = \sum_{u \in B} \left(\abs{\Ein(u)} + \frac{1}{2}\abs{\Eout(u)}\right)$.

In Algorithm~\ref{alg:cc2}, we draw a sample $S$ of size $t = o(n)$ from $B$ (Pass $1$) and find all neighbors of the sampled nodes (Pass $2$). By Lemma~\ref{lem:good-eb}, we know that nodes in $B$ have small degrees, which makes it feasible to store all neighbors of nodes in $S$ in the array $\Gamma[]$. After that, for every $i \in [t]$, we identify the pivot for every node in $\Gamma[i]$ using the streaming version of {\pp}, which runs in $k$ passes (Pass $3$ to $k+2$). Next, we discard from each $\Gamma[i]$ all nodes whose pivots are not the $i$-th sample $S[i]$ (Line~\ref{ln:prune}).  At this time, for each $i \in [t]$, we have $\Gamma[i] = {\tt Clu}(S[i])$ (i.e., $\Gamma[i]$ stores the cluster centered at $S[i]$ if $S[i]$ is a pivot, and $\Gamma[i] = \emptyset$ otherwise).  Finally, we use one more pass to compute $\Nin[i] (= \abs{\Ein(S[i])})$ and $\Nout[i](= \abs{\Eout(S[i])})$, and consequently estimate the value $\abs{\eb}$.

\vspace{2mm}
\noindent{\bf Proof of Lemma~\ref{lem:EB}.\ }
The proof is similar to that of Lemma~\ref{lem:EA}.

\begin{proof}
We assume that Lemma~\ref{lem:good-eb} holds, which occurs with probability at least $(1 - n^{-3})$.

For each $i \in [t]$, let $Y_i$ be the $i$-th sample in $S$. Let $Z_i = \abs{B} \cdot (\abs{\Ein(Y_i)} + \frac{1}{2} \abs{\Eout(Y_i)})$; we thus have 
\begin{equation*}
	\E[Z_i] = \frac{1}{\abs{B}} \sum_{u \in B} \abs{B}  \left(\abs{\Ein(u)} + \frac{1}{2} \abs{\Eout(u)}\right)
	= \abs{\eb},
\end{equation*}
and 
\begin{align*}
	Z_i \leq n \left(\frac{1}{2}\abs{{\tt Clu}(Y_i)}^2 + \frac{1}{2} \sum_{v \in {\tt Clu}(Y_i)} d_v\right) \leq n^{1+2\beta}.
\end{align*}

Let $\Delta = \max(\abs{\eb}, n^{1-\alpha})$. By the Bernstein inequality,
\begin{align*}
	\Pr\left[\abs{\frac{1}{t} \sum_{i \in [t]}Z_i - \abs{\eb}} \geq \eps \Delta \right] 
	&\leq 2 \exp\left(-\frac{t \eps^2 \Delta^2}{n^{1+2\beta}(2\abs{\eb} + (2/3)\eps\Delta )}\right) \\
	&\leq 2 \exp\left(-\frac{t \eps^2 n^{1-\alpha}}{n^{1+2\beta}(2 + (2/3))}\right) \\
	&= 2n^{-3}.
\end{align*}
Therefore, with probability $1 - O(n^{-3})$, $\tilde{m}_B = \frac{1}{t} \sum_{i \in [t]} Z_i$ is a $(1 \pm \eps, \pm \eps n^{1-\alpha})$-approximation of $\abs{\eb}$.

The space usage is dominated by the array of neighboring sets $\Gamma[1..t]$. Since $d_u \leq n^{\beta}$ for any $u \in B$, the total size of $\Gamma[]$ is bounded by $O(tn^{\beta}) = O\left(\frac{1}{\eps^2} n^{\alpha + 3\beta} \log n \right)$. The space needed to find the pivot of each node is bounded by $O(k)$, since we will query at most $k$ nodes. Therefore, the overall space is bounded by $O\left(\frac{k}{\eps^2} n^{\alpha + 3\beta} \log n \right)$ words.
\end{proof}

%% file: lb.tex
\section{Lower Bounds}
\label{sec:lb}

In this section, we present two lower bounds demonstrating that both multi-pass data access and additive error are necessary for achieving sublinear space complexity.  
These results are stated as follows.
\begin{theorem}[multiple passes are necessary]
	\label{thm:lb-multi-pass}
	For any $c \ge 1$ and $d \in (0, \frac{n}{3}]$, any one-pass $0.49$-error $(c, d)$-approximation streaming algorithm for computing the cost of the optimal correlation clustering on an input graph with $n$ nodes in the node-arrival model needs $\Omega(n)$ bits of space.
\end{theorem}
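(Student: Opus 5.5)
We reduce from the one-way communication problem \INDEX: Alice holds $x \in \{0,1\}^m$ and Bob holds an index $i \in [m]$. Bob must output $x_i$ with error probability at most $0.49$ after a single message from Alice. The randomized one-way complexity of \INDEX{} is $\Omega(m)$. We take $m = \Theta(n)$ and show that a one-pass $(c,d)$-approximation streaming algorithm with space $s$ yields a one-way protocol with message size $s$. Alice runs the algorithm on her prefix of the stream and sends the memory state to Bob. Bob continues the run on his suffix and reads the output. It follows that $s = \Omega(n)$.

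\textbf{Construction.} Since the node-arrival model lets the edge labels be given by an arbitrary similarity function, we fix a similarity function on a universe of objects and encode the instance in which objects arrive. Take $m = \Theta(n)$ disjoint pairs of distinguishable objects $(a_j, b_j)$ together with a special object $z_j$ for each $j$.
\begin{itemize}
\item The similarity function makes $z_j \sim a_j$ and $z_j \nsim b_j$, and makes every other pair dissimilar.
\item Alice streams, for each $j$, the node $a_j$ if $x_j = 1$ and the node $b_j$ if $x_j = 0$.
\item Bob streams the single node $z_i$, plus padding: dissimilar isolated dummy nodes, so that the total is $n$ nodes.
\end{itemize}

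\textbf{Cost gap.} All edges are negative except possibly the one between $z_i$ and Alice's $i$-th node.
\begin{itemize}
\item If $x_i = 1$, the only positive edge is $z_i \sim a_i$. Clustering them together gives cost $0$, so the optimum is $0$.
\item If $x_i = 0$, there are no positive edges, so the optimum is also $0$.
\end{itemize}
This yields no gap, so the instance must be altered. We need a positive cost when $x_i = 0$, which a single positive edge never forces. We therefore use a triangle-type gadget, the canonical obstruction for correlation clustering: two positive edges $u \sim w$, $w \sim v$ and one negative edge $u \nsim v$ force cost at least $1$. Concretely, Bob's query should complete such a frustrated configuration exactly when $x_i$ takes one value.

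To beat the additive error $d \le n/3$, we amplify by replicating $\Theta(d)$ disjoint gadgets that all depend on the same bit $x_i$.
\begin{itemize}
\item Alice's side: a block of $\Theta(n/m)$ copies per index, or equivalently a reduction on $m = \Theta(n)$ coordinates.
\item Bob's side: Bob inserts $\Theta(n)$ nodes $z_i^{(1)},\dots,z_i^{(\ell)}$, each similar to both $a_i$ and $b_i$.
\item Alice inserts both $a_j$ and $b_j$ when $x_j = 0$, and only $a_j$ when $x_j = 1$, with $a_j \nsim b_j$.
\end{itemize}
In the case $x_i = 1$, the graph is a star centered at $a_i$ with positive leaves $z^{(\cdot)}$ that are pairwise similar, so it can be made a clique and the optimum is $0$. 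In the case $x_i = 0$, each $z^{(t)}$ together with $a_i, b_i$ forms a frustrated triangle. Any clustering either separates $a_i$ from $b_i$, in which case each $z^{(t)}$ pays at least $1$, or merges them and pays $1$. We must choose the similarities among the $z$'s, for instance pairwise dissimilar, so that merging is also expensive. Then the optimum is at least $\min(\ell, \ldots) = \Omega(\ell)$.

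We choose $\ell$ with $\ell > d$, using $3\ell \le n$ so that the construction fits within $n$ nodes. A $(c,d)$-approximation then outputs at most $d$ in the first case and at least $\mathrm{OPT} > d$ in the second, so Bob decides $x_i$. Alice's part uses $2m$ nodes and Bob's uses $\ell$. To make $m = \Theta(n)$ while $\ell$ can reach about $n/3$, we balance the parameters, for example $m = n/3$ and $\ell \approx n/3$. This is where the bound $d \le n/3$ enters.

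\textbf{Main obstacle.} The hard step is making the gadget rigorous: we must show that the optimum is $0$ in one case and strictly larger than $d$ in the other, despite interactions among the other $m-1$ pairs and Bob's nodes. We will keep all cross-index pairs dissimilar, so those pairs sit in singleton clusters at no cost. We will also set the similarities among Bob's nodes carefully, so that the lower bound holds against every clustering rather than only natural ones.
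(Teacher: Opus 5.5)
\textbf{Genuine gap: the final gadget gives no usable cost separation.} Your reduction framework is right: one-way \INDEX{}, Alice's prefix state sent to Bob, and Bob deciding by comparing the output with $d$. The gadget you settle on, however, does not produce OPT $=0$ in one case and OPT $>d$ in the other. This is more than a rigor issue in the step you flag as the main obstacle.

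The similarities among Bob's nodes $z_i^{(1)},\dots,z_i^{(\ell)}$ are fixed by the similarity function and cannot depend on $x_i$. Your two cases need opposite choices.
\begin{itemize}
\item If the $z$'s are pairwise similar, then for $x_i=1$ the set $\{a_i\}\cup\{z^{(t)}\}$ is a positive clique, so OPT $=0$. But for $x_i=0$, putting $a_i,b_i$ and all $z$'s into one cluster costs only $1$ (the single pair $a_i\nsim b_i$). So OPT $=1$, which is not $>d$.
\item If the $z$'s are pairwise dissimilar, then for $x_i=1$ the graph is a star $K_{1,\ell}$ with pairwise-negative leaves. A cluster holding the center and $s$ leaves costs at least $\binom{s}{2}+\ell-s\ge \ell-1$, so OPT $=\ell-1\neq 0$. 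For $x_i=0$ one checks OPT $=2\ell-2$. That is only a factor-$2$ gap, which a $(c,d)$-approximation with $c\ge 2$ cannot resolve.
\end{itemize}
The triangle gadget with $b_i$ is the wrong kind of amplifier here; it is the $0$-versus-$1$ gadget suited to the additive-error lower bound, not this one.

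\textbf{How to fix it.} The bit should control whether Bob's $\ell$ pairwise-dissimilar nodes have a common positive neighbor at all. In the absent case there are then no positive edges, and OPT $=0$. Your first construction already does this once you replicate $z_i$: let Bob insert $\ell$ pairwise-dissimilar copies of $z_i$, each similar to $a_i$ and dissimilar to $b_i$. Then $x_i=0$ gives no positive edges, while $x_i=1$ gives the star $K_{1,\ell}$ with OPT $=\ell-1$.

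This is exactly the paper's construction. With $\ell=n/2$, Alice streams $(4j+x_j)\circ 0^\ell$ and Bob streams $(4b)\circ e_t^\ell$ for $t\in[\ell]$, under the similarity rule $\ell_1\le 1$. Bob's points are pairwise at distance $2$. Alice's $b$-th point is at distance $1$ from all of them iff $x_b=0$, and all other cross distances are at least $3$.

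The parameters must also give $\ell-1>n/3\ge d$. Your split $m=n/3$, $\ell\approx n/3$ with $2m$ Alice nodes would fail this even with a working gadget. One Alice node per index with $m=\ell=n/2$ satisfies it.
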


\begin{theorem}[additive error is needed]
	\label{thm:lb-additive-error}
	For any $c \ge 1$, any $O(1)$-pass $0.49$-error $(c, 0)$-approximation streaming algorithm for computing the cost of the optimal correlation clustering on an input graph with $n$ nodes in the node-arrival model needs $\Omega(n)$ bits of space.
\end{theorem}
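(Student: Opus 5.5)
The last statement is Theorem~\ref{thm:lb-additive-error}: any $O(1)$-pass $(c,0)$-approximation needs $\Omega(n)$ bits. I will reduce from two-party set disjointness \DISJ{} on $m=\Theta(n)$ bits. Its randomized communication complexity is $\Omega(m)$ even with an unbounded number of rounds, and hence also when the number of rounds is $O(1)$. The simulation is the standard one. Alice holds $x\in\{0,1\}^m$ and Bob holds $y\in\{0,1\}^m$. Each constructs a set of nodes, and the stream is Alice's nodes followed by Bob's. A $p$-pass streaming algorithm with space $s$ becomes a protocol of $2p-1$ messages, each of $s$ bits, so $s=\Omega(m/p)=\Omega(n)$ when $p=O(1)$.

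The key point is that a $(c,0)$-approximation must separate cost $0$ from cost $>0$. So it suffices to design the gadget so that the induced signed graph is perfectly clusterable (a disjoint union of cliques) iff $x\cap y=\emptyset$. When the sets intersect, the graph should contain an induced ``cherry'' $a\sim b\sim c$, $a\nsim c$, which forces cost at least $1$.

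Because the model is node-arrival, edges are given by a similarity function on node identities. This means I may encode each player's input into the node labels themselves, and the similarity function can read those labels. For each index $i\in[m]$, Alice creates a node $a_i$ carrying the bit $x_i$, and Bob creates nodes $b_i$ and $b'_i$ carrying the bit $y_i$. I then define the edges:
\begin{itemize}
\item $a_i\sim b_i$ always;
\item $a_i\sim b'_i$ iff $x_i=1$;
\item $b_i\sim b'_i$ iff $y_i=0$;
\item every pair not listed above (in particular, any pair of nodes with different indices) is negative.
\end{itemize}

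Each triple $\{a_i,b_i,b'_i\}$ is then an isolated gadget. I check the four cases of $(x_i,y_i)$:
\begin{itemize}
\item $(0,0)$: edges $a_i b_i$ and $b_i b'_i$ only, which is a cherry. This is bad, so the encoding needs fixing.
\end{itemize}

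To fix this I adjust the rules so that a cherry appears exactly when $x_i=y_i=1$. The new rule is: $a_i\sim b_i$ iff $x_i=1$, $b_i\sim b'_i$ always, and $a_i\sim b'_i$ iff $y_i=0$. The four cases now are:
\begin{itemize}
\item $x_i=0$: either only $b_ib'_i$ is positive, or $a_ib'_i$ and $b_ib'_i$ are. The second subcase is a cherry, so this version is also not yet right.
\end{itemize}

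The cleanest final choice is a two-node gadget together with a shared hub. Let $a_i\sim b_i$ iff $x_i=1$, and let a single extra node $h$ (placed on Bob's side) satisfy $h\sim b_i$ iff $y_i=1$. I then verify the structure:
\begin{itemize}
\item If $x_i=y_i=1$ for some $i$, then $a_i\sim b_i\sim h$ with $a_i\nsim h$. This is an induced cherry, so the cost is at least $1$.
\item Otherwise the positive components are stars centered at $h$ whose leaves $b_i$ are pairwise negative. A star with at least two leaves is itself a cherry, so this version also fails.
\end{itemize}

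To kill that star problem, I make all $b_i$ with $y_i=1$ mutually positive. The positive graph then decomposes as follows:
\begin{itemize}
\item a clique $K=\{h\}\cup\{b_i:y_i=1\}$;
\item edges $a_ib_i$ for $x_i=1$.
\end{itemize}
When the sets are disjoint, each positive edge $a_ib_i$ has $y_i=0$, so $b_i\notin K$. The components are therefore $K$, the edges $\{a_i,b_i\}$, and singletons, so the cost is $0$. When they intersect at some $i$, the triple $a_i\sim b_i\sim h$, $a_i\nsim h$ is a cherry, so the cost is at least $1$.

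With this construction, Alice's part of the stream consists of the $a_i$ nodes and Bob's part consists of the $b_i$ nodes and $h$. The number of nodes is $n=2m+1$, which gives $\Omega(n)$ bits. The main obstacle is the one just worked through: choosing a gadget whose positive graph is a disjoint union of cliques exactly in the disjoint case.
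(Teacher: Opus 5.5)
Your final hub-plus-clique construction is correct and follows the paper's route: a reduction from \DISJ{} with Alice's nodes streamed before Bob's, using that a $(c,0)$-approximation must separate cost $0$ from cost $\ge 1$, and turning $O(1)$ passes into $O(1)$ messages of $s$ bits. The paper's gadget is simpler: per index it uses four points in $\mathbb{R}^2$ under the fixed $\ell_1$-threshold similarity, with $p_i\sim a_i$ and $b_i\sim q_i$ always and $a_i\sim b_i$ iff $x_i=y_i=1$ (letting that one similarity depend on both bits is what removes the need for your hub and clique), and its bound holds even for a natural geometric similarity, whereas yours relies on an ad hoc, though legitimate, label-reading similarity function; you should also remove the abandoned attempts from the write-up.
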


We will make use of the following well-studied problems in communication complexity. 

\begin{definition}[\INDEX]
	In the \INDEX\ problem, Alice holds an $n$-bit vector ${x} = (x_1, \ldots, x_n)$, and Bob holds an index $b \in [n]$.  The goal is for Alice to send a message to Bob that allows him to determine $x_b$.
\end{definition}

\begin{lemma}[c.f.~\citet{KN96}]
	\label{lem:INDEX}
	Any randomized algorithm that solves \INDEX\ with success probability at least $0.51$ requires $\Omega(n)$ bits of communication.
\end{lemma}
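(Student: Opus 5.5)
The plan is the standard information-theoretic argument: show that the message must carry $\Omega(n)$ bits of mutual information about Alice's vector.

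\textbf{Fixing the setup.} Fix any randomized one-way protocol for \INDEX\ that succeeds with probability at least $0.51$ on every input $(x,b)$. Allow public randomness $\rho$; private-coin protocols are a special case. Draw $X=(X_1,\ldots,X_n)$ uniformly from $\{0,1\}^n$, independently of $\rho$. Let $M=M(X,\rho)$ be Alice's message, of length at most $\ell$ bits. Because the protocol succeeds on every input, for each fixed $b$ Bob's output $g_b(M,\rho)$ equals $X_b$ with probability at least $0.51$ over $X$ and $\rho$.

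\textbf{Lower-bounding the information in $M$.} The key chain is
\[
\ell \;\ge\; H(M\mid \rho) \;\ge\; I(M;X\mid \rho) \;=\; \sum_{i=1}^n I(M;X_i\mid X_{<i},\rho) \;\ge\; \sum_{i=1}^n I(M;X_i\mid \rho).
\]
The equality is the chain rule. For the last inequality, $X_i$ is independent of $(X_{<i},\rho)$, so
\[
I(M;X_i\mid X_{<i},\rho)=H(X_i\mid\rho)-H(X_i\mid M,X_{<i},\rho)\;\ge\; H(X_i\mid\rho)-H(X_i\mid M,\rho)=I(M;X_i\mid\rho),
\]
since conditioning reduces entropy.

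\textbf{Bounding each term by Fano.} Write $p_i$ for the probability that $g_i(M,\rho)=X_i$. Fano's inequality for a binary variable gives $H(X_i\mid M,\rho)\le h(1-p_i)$, where $h$ is the binary entropy. Since each $p_i\ge 0.51$, this yields $I(M;X_i\mid\rho)\ge 1-h(0.49)$ for every $i$. Summing over $i$ gives $\ell\ge n\,(1-h(0.49))$, which is $\Omega(n)$ because $1-h(0.49)>0$ is an absolute constant.

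\textbf{Main obstacle.} The main care point is keeping the conditioning on $\rho$ throughout, so that public coins do not break the argument. The other care point is the independence step that turns conditional mutual information into unconditional mutual information; this is exactly where the uniform product distribution is used. If one only had average-case success over a uniformly random $b$, the same argument still goes through, using concavity of $h$ (Jensen's inequality) on the average of the $p_i$.
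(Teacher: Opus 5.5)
The paper gives no proof of this lemma; it cites Kushilevitz--Nisan and uses the result as a black box. Your argument is correct and is the standard information-theoretic proof.

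It relies on one-wayness exactly where it must. The message $M$ is a function of $(X,\rho)$ only. This matters, because with two-way communication Bob could simply send $b$ using $O(\log n)$ bits, so the restriction built into the paper's definition of \INDEX\ is essential. The other steps also go through:
\begin{itemize}
\item Making all randomness public is harmless.
\item Independence of the coordinates of $X$ from each other and from $\rho$ justifies passing from $I(M;X_i\mid X_{<i},\rho)$ to $I(M;X_i\mid\rho)$.
\item Binary Fano gives each term at least $1-h(0.49)>0$.
\end{itemize}
One small nit: if messages may have variable length up to $\ell$, you only get $H(M\mid\rho)\le \ell+1$. This does not affect the $\Omega(n)$ conclusion.

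The main alternative is a combinatorial counting argument. Average over the coins to fix a deterministic one-way protocol with error at most $0.49$ under uniform $(x,b)$. Each message $m$ then determines Bob's answer vector $z_m\in\{0,1\}^n$, so the error is the expected normalized Hamming distance between $x$ and $z_{M(x)}$. By Markov's inequality, a $\frac{\delta}{0.49+\delta}$ fraction of $\{0,1\}^n$ must be covered by at most $2^{\ell+1}$ Hamming balls of radius $(0.49+\delta)n$. Each such ball has size at most $2^{h(0.49+\delta)n}$, which forces $\ell \ge (1-h(0.49+\delta))n - O(\log(1/\delta))$.

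Your route avoids the Markov step and its constant loss, so it yields the clean bound $(1-h(0.49))n$. As you note, it also covers average-case success over $b$ via concavity of $h$. The counting route avoids entropy entirely but gives nothing more general.
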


\begin{definition}[\DISJ]
	In the \DISJ\ problem, Alice holds an $n$-bit vector ${x} = (x_1, \dots, x_n)$, and Bob holds an $n$-bit vector ${y} = (y_1, \dots, y_n)$. Their goal is to exchange messages so that they output $1$ if there exists an index $i \in [n]$ with $x_i = y_i = 1$, and $0$ otherwise.
\end{definition}

\begin{lemma}[\citet{BYJ+04}]
	\label{lem:DISJ}
	Any randomized algorithm that solves \DISJ\ with success probability at least $0.51$ requires $\Omega(n)$ bits of communication.
\end{lemma}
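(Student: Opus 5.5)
This is the classical lower bound of \citet{BYJ+04}, and I would reproduce their information-complexity argument. First I boost the success probability. Running a protocol with success probability $0.51$ independently $O(1)$ times and taking the majority answer gives error at most a small constant $\delta$ (say $\delta = 0.01$), and the communication grows only by a constant factor. So it suffices to show that any $\delta$-error randomized protocol $\Pi$ for \DISJ\ has communication $\Omega(n)$. Since communication is at least the information revealed about the inputs, I lower bound the conditional information cost $I(X,Y;\Pi \mid D)$ under a suitable hard distribution.

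\textbf{Hard distribution.} Choose coordinates independently. For each $i$, pick $D_i \in \{A,B\}$ uniformly. If $D_i = A$, set $x_i = 0$ and let $y_i$ be a uniform bit; if $D_i = B$, set $y_i = 0$ and let $x_i$ be a uniform bit. Every input in the support is disjoint, and conditioned on $D$ the pair $(X,Y)$ is a product distribution. \textbf{Direct sum step:} using the chain rule and independence across coordinates, $I(X,Y;\Pi\mid D) \ge \sum_i I(X_i,Y_i;\Pi\mid D)$. For each $i$, embed a single-coordinate AND instance $(u,v)$ into position $i$. Alice and Bob fill the other coordinates using private randomness according to $D_{-i}$; each coordinate can be filled privately because one side's bit is fixed to $0$. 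Then \DISJ\ on the embedded input equals AND$(u,v)$, so each summand is at least the conditional information cost $\mathrm{IC}(\mathrm{AND})$ of some $\delta$-error protocol for AND on the one-bit distribution. This yields $\mathrm{CC}(\Pi) \ge n\cdot \mathrm{IC}(\mathrm{AND})$.

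\textbf{Main obstacle: $\mathrm{IC}(\mathrm{AND}) = \Omega(1)$.} The information is measured only on the inputs $(0,0),(0,1),(1,0)$, yet correctness is needed on $(1,1)$. Write $\Pi_{ab}$ for the transcript distribution on input $(a,b)$. I would take three steps. (i) Lower bound mutual information by squared Hellinger distance: $I \ge \tfrac12\bigl(h^2(\Pi_{00},\Pi_{01}) + h^2(\Pi_{00},\Pi_{10})\bigr)$, and by Cauchy--Schwarz and the triangle inequality this is $\ge \tfrac14 h^2(\Pi_{01},\Pi_{10})$. (ii) Apply the cut-and-paste property of protocols, which comes from the rectangle structure of transcript probabilities: $h(\Pi_{01},\Pi_{10}) = h(\Pi_{00},\Pi_{11})$. (iii) Use correctness: $\Pi$ outputs $0$ on $(0,0)$ and $1$ on $(1,1)$ with probability at least $1-\delta$ each, so the total variation distance between $\Pi_{00}$ and $\Pi_{11}$ is at least $1-2\delta$. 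Hence $h^2(\Pi_{00},\Pi_{11}) \ge \tfrac12(1-2\delta)^2$, and combining the three steps gives $\mathrm{IC}(\mathrm{AND}) \ge \tfrac18(1-2\delta)^2$.

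Putting these together, the communication is at least $n\cdot\Omega(1) = \Omega(n)$. The delicate part is step (ii), which needs the protocol's transcript probabilities to factor as $q_1(x,\tau)\,q_2(y,\tau)$ under public randomness. I would handle this by conditioning on the public coins and using private coins inside the factorization.
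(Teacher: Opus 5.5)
Your proposal is correct and is essentially the proof behind the statement. The paper gives no argument of its own and only cites \citet{BYJ+04}, and you reproduce their information-complexity proof: an auxiliary $D$ makes the hard distribution a product of one-bit distributions supported on disjoint inputs, a direct sum reduces to one-bit AND, and Hellinger distance, cut-and-paste, and correctness on $(1,1)$ finish it.

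The one detail worth writing out is the public-coin case. After fixing the public coins $r$, the protocol $\Pi^r$ need not have error $\delta$ on every input, so step (iii) must be applied on average: $\E_r\, h^2(\Pi^r_{00},\Pi^r_{11}) \ge \tfrac12 \bigl(\E_r\, \mathrm{TV}(\Pi^r_{00},\Pi^r_{11})\bigr)^2 \ge \tfrac12(1-2\delta)^2$ by Jensen. Alternatively, invoke Newman's theorem, whose additive $O(\log n)$ cost is harmless here.
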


In our lower bound proofs, we work with points in $\mathbb{R}^d$ under the $\ell_1$ distance, treating each vector as its corresponding point. The similarity function $sim(\cdot,\cdot)$ is defined for two points $p$ and $q$ such that $sim(p,q) = 1$ if $\ell_1(p,q) \le 1$, and $sim(p,q) = 0$ otherwise.

We denote by ${e}_i^n$ the $i$-th standard basis vector in $\mathbb{R}^n$, and by ${0}^n$ the $n$-dimensional zero vector. For vectors $x \in \mathbb{R}^a$ and $y \in \mathbb{R}^b$, we write $x \circ y$ to mean their concatenation, i.e., the vector in $\mathbb{R}^{a+b}$ formed by appending $y$ to $x$.

\vspace{2mm}
\noindent{\bf Proof of Theorem~\ref{thm:lb-multi-pass}.\ }
Let $\ell = \frac{n}{2}$. We perform a reduction from the \INDEX\ problem.  Given the input ${x} = (x_1, \ldots, x_\ell)$, Alice constructs points $(4 \times 1 + x_1) \circ {0}^\ell, \ldots (4 \times \ell + x_\ell) \circ {0}^\ell$. Given the input $b$, Bob constructs points $(4b) \circ {e}_1^\ell, \ldots, (4b) \circ {e}_\ell^\ell$. We note that although each point lies in $\mathbb{R}^{\ell+1}$, they only have at most two non-zero entries, and thus can be represented using $O(1)$ words. 

It is not difficult to see that if \INDEX$({x}, b) = 1$, then the optimal correlation clustering places each point in its own singleton cluster, incurring zero cost.  If \INDEX$({x}, b) = 0$, then the optimal correlation clustering also places each point in its own singleton cluster, except that $(4b) \circ {0}^\ell$ will be clustered with one of the points in $(4b) \circ {e}_1^\ell, \ldots, (4b) \circ {e}_\ell^\ell$.  
The cost of the clustering is $\ell - 1 > \frac{n}{3}$.  Therefore, any one-pass $(c, d)$-approximation streaming algorithm with any $c \ge 1$ and $0 \le d \le \frac{n}{3}$ can be used to solve \INDEX.  

Theorem~\ref{thm:lb-multi-pass} follows directly from Lemma~\ref{lem:INDEX} and the above reduction.

\paragraph{Proof of Theorem~\ref{thm:lb-additive-error}}
Let $\ell = \frac{n}{4}$.  We perform a reduction from the \DISJ\ problem.  Given the input ${x} = (x_1, \ldots, x_\ell)$, Alice constructs $\ell$ points $(2 \times 1) \circ (0 + x_1), \ldots (2 \times \ell) \circ (0 + x_\ell)$ (named by $a_1, \ldots, a_\ell$) and $\ell$ points $(2 \times 1) \circ (-1 + x_1), \ldots (2 \times \ell) \circ (-1 + x_\ell)$ named by $p_1, \ldots, p_\ell$). Given the input ${y} = (y_1, \ldots, y_\ell)$, Bob constructs $\ell$ points $(2 \times 1) \circ (3 - y_1), \ldots (2 \times \ell) \circ (3 - y_\ell)$ (named by $b_1, \ldots, b_\ell$) and $\ell$ points $(2 \times 1) \circ (4 - y_1), \ldots (2 \times \ell) \circ (4 - y_\ell)$ (named by $q_1, \ldots, q_\ell$).  Note that each point lies in $\mathbb{R}^2$.

It is not difficult to see that if \DISJ$({x}, {y}) = 0$, then the optimal correlation clustering places each pair $(a_i, p_i)\ (i \in [\ell])$ in a cluster and each pair $(b_i, q_i)\ (i \in [\ell])$ in a cluster, incurring zero cost. If \DISJ$({x}, {y}) = 1$, then the optimal correlation clustering is the same as the case when  \DISJ$({x}, {y}) = 0$, but the cost of the clustering becomes $1$.  Therefore, $(c, 0)$-approximation streaming algorithm with any $c \ge 1$ can be used to solve \DISJ.  

Theorem~\ref{thm:lb-additive-error} follows directly from Lemma~\ref{lem:DISJ} and the above reduction.

%% file: exp.tex
\section{Experiments}
\label{sec:experiments}

\noindent{\bf Datasets.\ }
We evaluate on three representative datasets: two sparse, highly clustered, real-world graphs and a dense image similarity graph: 
\begin{description}
\item[\texttt{Wikipedia}.] A hyperlink graph formed from a 2013 snapshot of English Wikipedia~\cite{BV04}.
	The similarity function draws an edge $(\sigma_i, \sigma_j)$ iff page $i$ links to page $j$ or vice versa.  It contains \num{4206785} nodes and \num{183879456} edges.
\item[\texttt{LiveJournal}.] A social network graph formed from a 2006 snapshot of LiveJournal~\cite{BHKL06}. 
	The similarity function draws an edge $(\sigma_i, \sigma_j)$ iff user $i$ and user $j$ are friends.  It contains \num{3997962} nodes and  \num{69362378} edges.  
\item[\texttt{ImageNet-21K}.] A large image dataset with human-assigned labels \cite{DDS+09}.
	We use a subset of this dataset, generated by randomly selecting 10\% of the labels to include.
	We then map each image to a vector embedding in $\mathbb{R}^{1024}$ using a pretrained OpenCLIP ViT-g/14 model \texttt{laion2b-s34b-b88k}~\cite{ilharco_gabriel_2021_5143773}.
	The similarity function considers cosine similarity between these embeddings.
	Formally, there is an edge $(\sigma_i, \sigma_j)$ iff $\cos(\operatorname{embedding}(\sigma_i), \operatorname{embedding}(\sigma_j)) > \theta$, for some fixed threshold $\theta$.  The resulting graph contains \num{385963} nodes and \num{110496622} edges.
\end{description}

\vspace{2mm}
\noindent{\bf Algorithms.\ } 
We compare \texttt{Pivot}, \texttt{PrunedPivot}, \texttt{C\textsuperscript{4}Approx}, \texttt{SimpleSampling} (the simple algorithm described in ``high-level ideas'' in Section~\ref{sec:alg}) on the test datasets, for varying space usage and threshold $k$.  We also compare the pivot based algorithm \texttt{ASW23-Pivot} and the sparse-dense-decomposition based algorithm  \texttt{ASW23-SDD} in \citet{ASW23}, implemented in node-arrival streams. 
For \texttt{C\textsuperscript{4}Approx}, at Line~\ref{ln:C4-output}, we just output $(\tilde{m}_A + \tilde{m}_B)$ without the normalization terms.

For \texttt{PrunedPivot}, \texttt{C\textsuperscript{4}Approx} and \texttt{ASW23-Pivot}, we report the relative error against \texttt{Pivot} for the same $\pi$. 
This metric isolates the error introduced by pruning and sampling, ignoring the inherent variance of \texttt{Pivot} itself.  The \texttt{ASW23-SDD} algorithm is built on a different method (sparse–dense decomposition) and its empirical mean deviates from other algorithms, making it less appropriate to use \texttt{Pivot} as a baseline or to report relative error against it. We therefore compare our algorithm \texttt{C\textsuperscript{4}Approx} with \texttt{ASW23-SDD} in a separate figure using the absolute clustering cost as the measurement.

\vspace{2mm}
\noindent{\bf Experimental Environment.\ }
We run our experiments on a Dell PowerEdge R740 server with 2 Intel Xeon Gold 6248R CPUs (totaling 48 cores and 96 threads) and 256 GiB of memory.
Our software is implemented in C++ and compiled with Intel oneAPI.

\vspace{2mm}
\noindent{\bf Space-Accuracy Tradeoff.\ }
For a fixed $k=15$, we vary the space budget from $2\%$ to $16\%$ and report the empirical mean and standard deviation of the relative error across 100 randomly selected $\pi$.

Figure~\ref{fig:space-vs-error} shows that the mean of \texttt{C\textsuperscript{4}Approx} matches \texttt{PrunedPivot} while the standard deviation converges rapidly as space increases, confirming the prediction that the estimator variance decreases with increasing sample size.  We note that $10\%$ space is sufficient for \texttt{C\textsuperscript{4}Approx} to get under $2\%$ standard deviation. 

\begin{figure}[t]
	\centering
	\includegraphics[width=\textwidth]{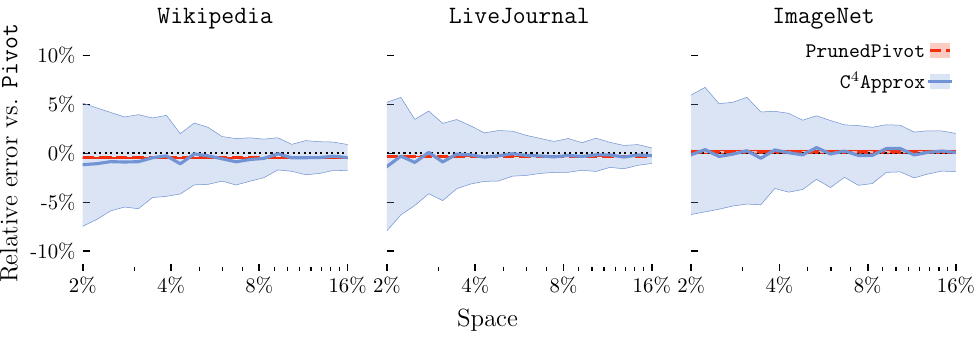}
	\caption{Relative error of \texttt{PrunedPivot} and \texttt{C\textsuperscript{4}Approx} w.r.t.~\texttt{Pivot} for varying space ($k = 15$). Shaded regions show ±1 SD.}
	\label{fig:space-vs-error}
\end{figure}

\begin{figure}[t]
	\centering
	\includegraphics[width=\textwidth]{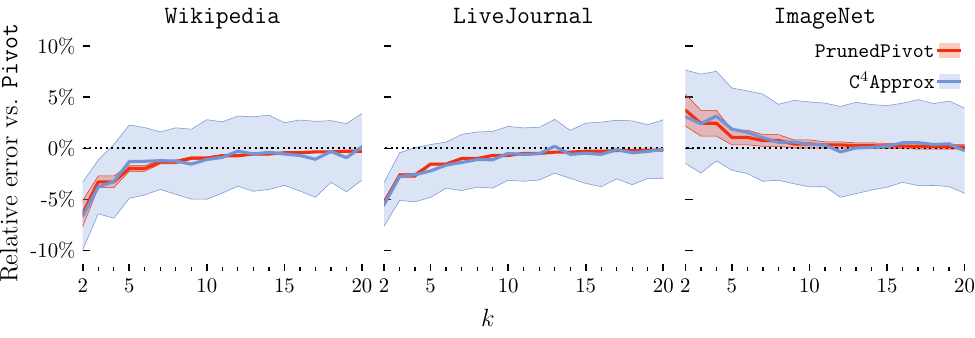}
	\caption{Relative error of \texttt{PrunedPivot} and \texttt{C\textsuperscript{4}Approx} w.r.t.~\texttt{Pivot} for varying $k$ (space fixed at 4\%). Shaded regions show ±1 SD.}
	\label{fig:k-vs-error}
\end{figure}

\begin{figure}[t]
	\centering
	\begin{minipage}[t]{0.5\linewidth}
		\centering
        \captionsetup{width=0.9\textwidth}
        \includegraphics[width=\textwidth]{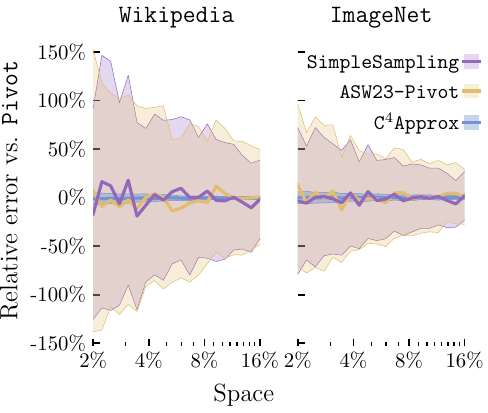}
        \caption{\texttt{C\textsuperscript{4}Approx}, \texttt{SimpleSampling}, and \texttt{ASW23-Pivot} compared against \texttt{Pivot} for varying space and $k=15$. Shaded regions show $\pm 1$ SD.}
        \label{fig:space-vs-value-pivot}
	\end{minipage}%
    \hfill%
	\begin{minipage}[t]{0.5\linewidth}
		\centering
        \captionsetup{width=.9\textwidth}
		\includegraphics[width=\textwidth]{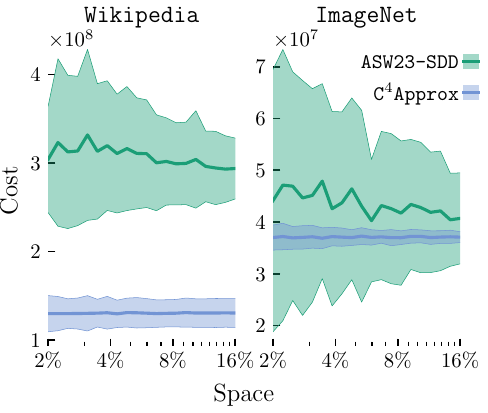}
        \caption{Absolute cost of \texttt{C\textsuperscript{4}Approx} and  \texttt{ASW23-SDD} for varying space and $k=15$. Shaded regions show ±1 SD.}
        \label{fig:space-vs-value-sdd}
	\end{minipage}
\end{figure}

Figure~\ref{fig:space-vs-value-pivot} compares \texttt{C\textsuperscript{4}Approx}, \texttt{SimpleSampling} and \texttt{ASW23-Pivot}. We notice that \texttt{ASW23-Pivot} and \texttt{SimpleSampling} exhibit similar variances: both significantly higher than that of \texttt{C\textsuperscript{4}Approx}. For instance, at 2\% space, the standard deviation of \texttt{ASW23-Pivot} is about 150\% of the mean on Wikipedia and close to 100\% on ImageNet, whereas \texttt{C\textsuperscript{4}Approx} achieves roughly 5\%. The empirical means of all algorithms are similar, with little to no bias from the mean of \texttt{Pivot}.

Figure~\ref{fig:space-vs-value-sdd} plots the results of \texttt{C\textsuperscript{4}Approx} and \texttt{ASW23-SDD}.
It is clear that on both Wikipedia and ImageNet-21K, the standard deviation of \texttt{ASW23-SDD}  is also much higher than \texttt{C\textsuperscript{4}Approx}. On the Wikipedia dataset, \texttt{ASW23-SDD}  also has a noticeably higher empirical mean than \texttt{C\textsuperscript{4}Approx}. This is largely due to the sparsity of the graph and the scarcity of high-degree nodes, causing the additive
term of \texttt{ASW23-SDD} to dominate its output.

\vspace{2mm}
\noindent{\bf Pass-Accuracy Tradeoff.\ }
We also investigate the trade-off between the accuracy of \texttt{C\textsuperscript{4}Approx} and  \texttt{PrunedPivot} (compared with \texttt{Pivot}) and the number of passes (equal to $k+4$). 
For a fixed space budget $4\%$, we vary $k$ from $2$ to $20$ and report the empirical mean and standard deviation of the relative error across 100 randomly selected $\pi$.

Figure~\ref{fig:k-vs-error} shows that the mean of \texttt{C\textsuperscript{4}Approx} closely tracks that of \texttt{PrunedPivot}.  
We observe that the bias of both rapidly decay with increasing $k$. 
For \texttt{C\textsuperscript{4}Approx}, $k = 5$ is sufficient to get under $2\%$ relative error in expectation.

\vspace{2mm}
\noindent{\bf Summary.\ }
Our experiments confirm that \texttt{C\textsuperscript{4}Approx} provides an unbiased, low-variance estimate of \texttt{PrunedPivot} in sublinear space and that the accuracy of the estimator improves when either the space or the number of passes is increased. It outperforms all competitors in terms of accuracy and stability.

%% file: appendix.tex
\section{Math Tools}
\label{app:math-tool}

\begin{lemma}[Chernoff Bound] \label{lem:chernoff}
	Let $\mathcal{X} = \{x_1, \dots, x_N\}$ be a finite population of $N$ points and $X_1, \dots, X_t$ be a random sample drawn without replacement from $\mathcal{X}$. Then for any $\eps \in [0,1]$,
	\begin{equation*}
		\Pr\left[ \abs{\frac{1}{t}\sum_{i=1}^t X_i - \mu} \geq \eps \mu \right] \leq 2 \exp \left(-\frac{\eps^2 \mu t}{3}\right),
	\end{equation*}
	where $\mu = \frac{1}{N} \sum_{i=1}^N x_i$ is the mean of $\mathcal{X}$.
\end{lemma}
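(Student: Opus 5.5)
The plan is to reduce sampling without replacement to sampling with replacement, then run the standard multiplicative Chernoff argument via moment generating functions. I will work under the assumption that the statement implicitly makes and that all its uses in the paper satisfy: $x_i \in [0,1]$ for all $i$. Every application in the paper uses indicator variables, such as membership in $N(u)\cap R$ or in $N^{\mis}_A(u)$. Write $S = \sum_{i=1}^t X_i$, so $\E[S] = t\mu$, and the event in question is $\abs{S - t\mu} \ge \eps t\mu$.

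\textbf{Step 1: comparison with independent sampling.} Let $Y_1,\dots,Y_t$ be drawn i.i.d.\ uniformly from $\mathcal{X}$, and let $T=\sum_i Y_i$. I will invoke Hoeffding's classical comparison theorem (Hoeffding 1963, Theorem 4): for every continuous convex $f$,
\[
\E[f(S)] \le \E[f(T)].
\]
Applying this to $f(s)=e^{\lambda s}$ for any $\lambda\in\mathbb{R}$ gives $\E[e^{\lambda S}] \le \E[e^{\lambda T}] = \bigl(\E[e^{\lambda Y_1}]\bigr)^t$.

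This is the only step that genuinely uses the without-replacement structure, so I expect it to be the main obstacle. If a self-contained argument is wanted, I would reproduce Hoeffding's proof. It writes $S$ as a conditional expectation of $T$: condition on the multiset drawn with replacement and average over relabelings. The conclusion then follows from Jensen's inequality. Failing that, I would cite the result directly.

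\textbf{Step 2: MGF bound.} For $y\in[0,1]$, convexity of $e^{\lambda y}$ gives $e^{\lambda y}\le 1+(e^\lambda-1)y$. Hence
\[
\E[e^{\lambda Y_1}] \le 1+(e^\lambda-1)\mu \le \exp\bigl((e^\lambda-1)\mu\bigr),
\]
and therefore $\E[e^{\lambda S}]\le \exp\bigl((e^\lambda-1)t\mu\bigr)$.

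\textbf{Step 3: the two tails.} For the upper tail, apply Markov's inequality to $e^{\lambda S}$ with $\lambda=\ln(1+\eps)$. This yields
\[
\Pr[S\ge(1+\eps)t\mu]\le \left(\frac{e^\eps}{(1+\eps)^{1+\eps}}\right)^{t\mu}\le \exp\!\left(-\frac{\eps^2 t\mu}{2+\eps}\right)\le \exp\!\left(-\frac{\eps^2 t\mu}{3}\right),
\]
where the last inequality uses $\eps\le 1$. For the lower tail, use $\lambda=\ln(1-\eps)<0$, which gives
\[
\Pr[S\le(1-\eps)t\mu]\le \exp\!\left(-\frac{\eps^2 t\mu}{2}\right).
\]
Both tail inequalities are the textbook ones, which follow from elementary inequalities for $\ln(1\pm\eps)$; the boundary case $\eps = 1$ needs no separate treatment. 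Adding the two tails gives the factor $2$ and completes the proof of the lemma.
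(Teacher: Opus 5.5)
Your argument is correct. The paper gives no proof of this lemma and simply uses it as a known tool; your route is the standard derivation behind it: Hoeffding's convex-order comparison of sampling without and with replacement, followed by the textbook MGF Chernoff bound.

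Your added hypothesis $x_i\in[0,1]$ is not merely convenient but necessary, since the statement as written fails by scaling. Take the population $\{0,c\}$ with $t=1$ and $\eps=1$: the event has probability $1$, while the bound is $2e^{-c/6}<1$ once $c>6\ln 2$.

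Two small repairs are needed.
\begin{itemize}
\item In sketching Hoeffding's theorem, do not condition on the with-replacement multiset, because that determines $T$. Condition instead on the without-replacement sample $X_1,\dots,X_t$. Realize $T$ in distribution as $T'=\sum_j M_j X_{\tau(j)}$, where the multiplicity vector $M$ (with $\sum_j M_j=t$) and the uniform relabeling $\tau$ are independent of each other and of the $X_j$. Then $\E[T'\mid X_1,\dots,X_t]=S$, and conditional Jensen gives $\E[f(S)]\le \E[f(T)]$.
\item At $\eps=1$ the choice $\lambda=\ln(1-\eps)$ is undefined. Obtain $\Pr[S=0]\le e^{-t\mu}$ instead by letting $\lambda\to-\infty$, or take the limit $\eps\uparrow 1$ using continuity of probability on decreasing events.
\end{itemize}
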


\begin{lemma}[Bernstein Inequality] \label{lem:bernstein}
	With the notations of Lemma~\ref{lem:chernoff}, suppose that $x_i \in [0,M]$ for each $i \in [N]$. Then for any $\delta > 0$,
	\begin{equation*}
		\Pr\left[ \abs{\frac{1}{t}\sum_{i=1}^t X_i - \mu} \geq \delta \right] \leq 2 \exp \left(-\frac{t\delta^2}{2M\mu + (2/3) M \delta}\right).
	\end{equation*}
\end{lemma}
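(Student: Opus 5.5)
The plan is to reduce to the with-replacement case and then run the usual exponential-moment (Chernoff--Cram\'er) argument for bounded variables. Let $Y_1,\dots,Y_t$ be drawn \emph{with} replacement, i.e.\ i.i.d.\ uniform from $\mathcal{X}$, so $\E[Y_i]=\mu$.

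The first step, and the one I expect to be the main obstacle, is the comparison between sampling without and with replacement. I would invoke Hoeffding's classical result (Hoeffding 1963, Theorem 4): for every continuous convex $f$,
\[
\E\Bigl[f\Bigl(\textstyle\sum_{i=1}^t X_i\Bigr)\Bigr] \le \E\Bigl[f\Bigl(\textstyle\sum_{i=1}^t Y_i\Bigr)\Bigr].
\]
If a self-contained argument is preferred, the route I would try first is Hoeffding's own construction. Write the without-replacement sum as a conditional expectation (an average) of a with-replacement-type sum under a suitable coupling. Then apply Jensen's inequality to pass $f$ inside. Applying the comparison with $f(s)=e^{\lambda s}$ and with $f(s)=e^{-\lambda s}$, for $\lambda>0$, shows that both one-sided moment generating functions of $\sum_i X_i - t\mu$ are bounded by those of $\sum_i Y_i - t\mu$. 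It therefore suffices to prove both tails for the i.i.d.\ variables $Y_i$.

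Next comes the variance bound. Set $Z_i = Y_i-\mu$. Each $Z_i$ is centered with $|Z_i|\le M$, since $Y_i,\mu\in[0,M]$. Its variance satisfies
\[
\sigma^2=\E[Z_i^2]\le \E[Y_i^2]\le M\,\E[Y_i]=M\mu,
\]
where the middle inequality uses $Y_i^2\le MY_i$ for $Y_i\in[0,M]$.

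For the moment generating function, I would use the standard Bernstein estimate. Expanding $e^{\lambda Z_i}$ and bounding $\E[Z_i^k]\le \sigma^2 M^{k-2}$ for $k\ge 2$, together with $k!\ge 2\cdot 3^{k-2}$, gives, for $0<\lambda<3/M$,
\[
\E[e^{\lambda Z_i}]\le \exp\Bigl(\frac{\lambda^2\sigma^2}{2(1-\lambda M/3)}\Bigr).
\]
Markov's inequality then yields
\[
\Pr\Bigl[\sum_i Z_i\ge t\delta\Bigr]\le \exp\Bigl(-\lambda t\delta+\frac{t\lambda^2\sigma^2}{2(1-\lambda M/3)}\Bigr).
\]
Choosing $\lambda=\delta/(\sigma^2+M\delta/3)$ turns the exponent into $-t\delta^2/(2\sigma^2+(2/3)M\delta)$. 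Substituting $\sigma^2\le M\mu$ gives the claimed one-sided bound $\exp\bigl(-t\delta^2/(2M\mu+(2/3)M\delta)\bigr)$.

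The lower tail follows identically from $-Z_i$, which is also centered, bounded by $M$ in absolute value, and has the same variance. By the comparison step, both tail bounds transfer to the $X_i$. A union bound over the two tails produces the factor $2$.
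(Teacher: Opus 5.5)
The paper gives no proof of this lemma; it lists it in the appendix as a standard tool, namely Bernstein's inequality carried over to sampling without replacement. So there is nothing to compare against. Your proposal supplies exactly that standard derivation, and it is correct.

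The individual steps all check out:
\begin{itemize}
\item Hoeffding's convex-order comparison, applied with $f(s)=e^{\pm\lambda s}$, handles the without-replacement sampling.
\item The variance bound $\sigma^2\le \E[Y_i^2]\le M\mu$ holds.
\item The moment bound $\E[Z_i^k]\le \sigma^2 M^{k-2}$, combined with $k!\ge 2\cdot 3^{k-2}$, gives the stated MGF estimate.
\item The choice $\lambda=\delta/(\sigma^2+M\delta/3)$ gives exactly $-t\delta^2/(2\sigma^2+(2/3)M\delta)$, which weakens to the stated exponent.
\end{itemize}

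Two small points of precision:
\begin{itemize}
\item Hoeffding's comparison transfers moment generating functions, not tail probabilities. Convex-order domination does not by itself imply tail domination. So ``it suffices to prove both tails for the $Y_i$'' should read ``it suffices to bound the MGFs of the $Y_i$, and then apply Markov's inequality to $\sum_i X_i$ directly.'' That is what your argument actually uses, so nothing breaks.
\item In the degenerate case $\sigma^2=0$, your $\lambda$ equals $3/M$, where $1-\lambda M/3=0$ and the MGF bound is undefined. In that case every $x_i=\mu$, the probability is zero, and the inequality holds trivially.
\end{itemize}
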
